\newcommand{\REV}[1]{#1}
  \providecommand\BibTeX{{%
    \normalfont B\kern-0.5em{\scshape i\kern-0.25em b}\kern-0.8em\TeX}}}
\newcommand{\para}[1]{\smallskip\noindent \textit{#1}}
\newcommand{\key}[1]{\, {\underline{{#1}}}\, }
\newcommand{\nonkey}[1]{#1}
\DeclareMathOperator*{\argmin}{arg\,min}
\newcommand{\join}{\, \tiny \textifsym{|><|}\, }
\newcommand{\semijoin}{\, \tiny \textifsym{|><}\, }
\newcommand{\antijoin}{\, \tiny \textifsym{|>}\, }
\newcommand{\leftouter}{\, \tiny \textifsym{d|><|}\, }
\newcommand{\rightouter}{\, \tiny \textifsym{|><|d}\, }
\newcommand{\fullouter}{\, \tiny \textifsym{d|><|d}\, }
\begin{document}

\title{Weighted Random Sampling over Joins}

\begin{abstract}
Joining records with all other records that meet a linkage condition can result in an astronomically large number of combinations due to many-to-many relationships. For such challenging (acyclic) joins, a random sample over the join result is a practical alternative to working with the oversized join result. Whereas prior works are limited to uniform join sampling where each join row is assigned the same probability, the scope is extended in this work to weighted sampling to support emerging applications such as scientific discovery in observational data and privacy-preserving query answering. Notwithstanding some naive methods, this work presents the first approach for weighted random sampling from join results. Due to a lack of baselines, experiments over various join types and real-world data sets are conducted to show substantial memory savings and competitive performance with main-memory index-based approaches in the equal-probability setting. In contrast to existing uniform sampling approaches that require prepared structures that occupy contested resources to squeeze out slightly faster query-times, the proposed approaches exhibit qualities that are urgently needed in practice, namely reduced memory footprint, streaming operation, support for selections, outer joins, semi joins and anti joins and unequal-probability sampling. All pertinent code and data can be found at: https://github.com/shekelyan/weightedjoinsampling

\end{abstract}


\author{Michael Shekelyan}
\authornote{ While at University of Warwick: development/analysis/evaluation of proposed approaches, writing/editing the manuscript and writing/running all the implemented code. While at King's College London: rewriting Section 3 and revising the manuscript.}
\email{michael.shekelyan@kcl.ac.uk}
\affiliation{%
  \institution{King's College London}
  \country{UK}
}

\author{Graham Cormode}
\email{g.cormode@warwick.ac.uk}
\affiliation{%
  \institution{University of Warwick}
  \city{Coventry}
  \country{UK}
}

\author{Peter Triantafillou, Ali Shanghooshabad, Qingzhi Ma }
\email{P.Triantafillou@warwick.ac.uk}
\affiliation{%
 \country{University of Warwick, UK}
}

\renewcommand{\shortauthors}{Shekelyan et al.}

\begin{teaserfigure}
\centering
\includegraphics[width=0.9\textwidth]{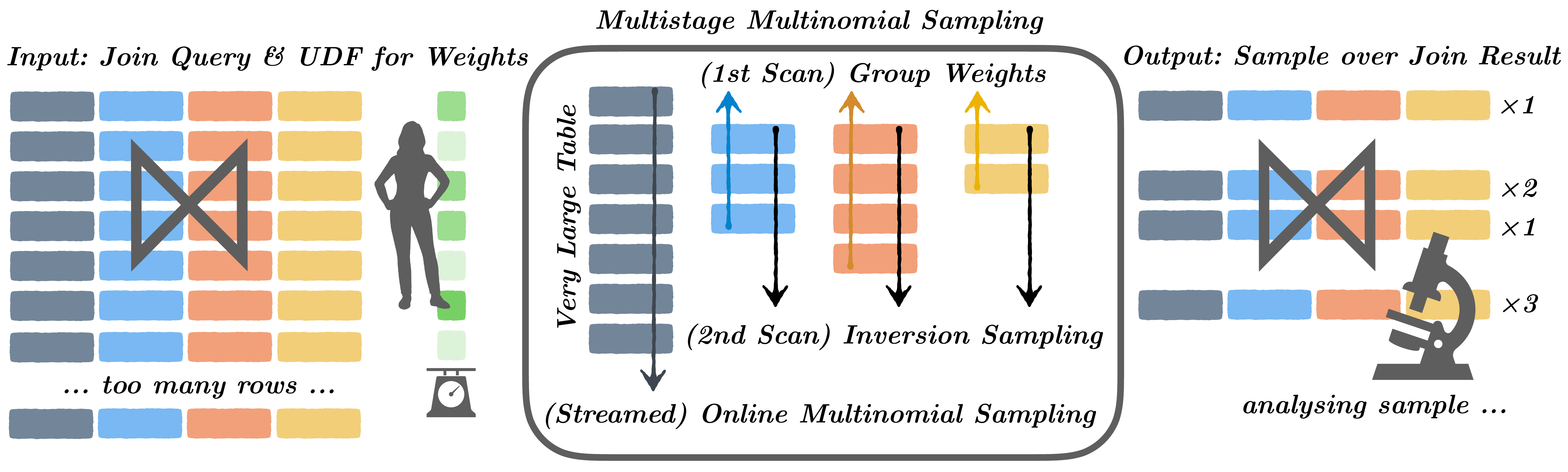}
\caption{The user provides a join query and weight function, the proposed stream sampler processes the largest table in one sequential scan and the other tables in two sequential scans and then outputs a weighted random sample over the join.
\label{fig:overview}}
\end{teaserfigure}

\maketitle

\section{Introduction}
\noindent

Collecting random samples has long been a vital tool to understand and analyse an underlying data population. The role of random sampling (unlike some quick-and-dirty heuristic sampling techniques \cite{DBLP:conf/sigmod/HaasH99, DBLP:conf/sigmod/0001WYZ16,feraud2010sampling, DBLP:conf/cidr/LeisRGK017}) 
is to provide a representative surrogate for the data population of interest with well-understood statistical properties that can be handed off directly to any systems or users.
Obtaining representative surrogates has become particularly pressing 
in the era of big data, where complex operations over massive data sets become prohibitively expensive.  Within the database literature, sampling is the key underpinning technology of approximate query processing (AQP) engines such as VerdictDB \cite{verdictdb} and QuickR \cite{quickr}.  Similarly, random samples are used for more advanced inferential analytics tasks, such as for training statistical and machine learning models, e.g., learned AQP engines depend on such samples \cite{deepdb,dbest++} and optimal subsampling  \cite{wang2021optimal,yao2019optimal,wang2021comparative,ting2018optimal,wang2019more} for tasks like logistic regression rely on unequal-probability samples.

\para{Why random sampling over joins?}
%
Joins of tables are a fundamental operator in relational database management systems, NoSQL databases, and in many analytics applications where data is spread across a variety of heterogeneous data sources. Unfortunately, multi-way joins can become excessively expensive, especially for very large tables, when join attributes are not keys and when they feature many-to-many relations. Fortunately, a random sample of the join result would be more than sufficient for many applications, for the reasons outlined above. The natural question that arises is then: Is sampling able to avoid the high costs of computing join queries? Although efficiently collecting random samples for a single base table is trivial, collecting a random sample over a join of tables remains a formidable challenge.

\para{Why weighted random sampling over joins?}

Weighted random sampling allows users and systems to weigh potential join results such that if a row in the join result has twice the weight of another, it is twice as likely to be included in the sample. Sampling with probabilities proportional to weights is not just some exotic variation of the sampling problem, but rather the inevitable response to a large variety of modern day requirements that cannot be meet with uniform probability sampling:

\begin{itemize}
\item \para{Privacy-Protection.}
In privacy-preservation frameworks for query-answering such as \emph{differential privacy} \cite{dwork2008differential} with various database applications \cite{dong2021residual,kotsogiannis2019privatesql,narayan2012djoin,machanavajjhala2008privacy}, any non-random selections based on user-derived selection predicates violate the formal privacy requirements. In contrast, weighted sampling satisfies the requirements as long as the weights are sufficiently robust to individual users \footnote{
\REV{
\emph{Differentially private selections over joins}: Suppose the joined tables are non-secret, while weights are sensitive, because they are based on private user information. In outline, suppose each column value $x$ has a utility value $u(x)$ s.t. each join row $x_1 x_2 \ldots x_n$ has a utility value $u(x_1)+u(x_2)+\ldots+u(x_n)$. Suppose the utility values are based on a set of private data records and that adding/removing a user's record from the set changes the sum $u(x_1)+u(x_2)+\ldots+u(x_n)$ by at most $\Delta$. The sampling weight is then determined using the exponential mechanism \cite{mcsherry2007mechanism}, i.e., $w(x_1 x_2 \ldots x_n) = \exp( \frac{\varepsilon}{2\Delta} \sum_{i=1}^n u(x_i))$. Generating a multinomial join sample of size $k$ is then $(k \varepsilon)$-differentially private via sequential composition \cite{dwork2014algorithmic}, which means the probability of observing a particular join sample increases/decreases at most by a multiplicative factor $\exp(k\varepsilon)$ due to the presence/absence of an individual user record.}}

\item \para{Stratified Sampling / Join over Selections.}
Join rows violating column-wise selection predicates can be assigned a weight of zero via column-based weights. Hence, weighted sampling can be used to collect stratified samples, e.g., as needed by AQP systems \cite{verdictdb,quickr,chaudhuri2007optimized}.

\item \para{Probability proportional to size (PPS) sampling.}
A common sampling design samples larger groups proportionally more likely \cite{rosen1997sampling,olken1995random,srivastava2004memory} which is needed for sampling over groups of records.

\item \para{Data exploration.}
Weighted sampling allows smoother types of selection as a function of data values with no hard cut-off, which allows to prioritise more relevant entries, e.g., featuring larger sales volumes, more recent sales or sales closer to a location. Similarly, it makes it easier to sample outliers as instead of having to know the parameters of an outlier, it suffices to weigh expected entries lower. Thus, weighted sampling is utilised for scientific discovery in observational data \cite{sidirourgos2013scientific}.


\item \para{Aggregations.}
Weighted sampling can be more accurate for aggregations on attribute values. For instance, a random sample biased towards larger values is more accurate for aggregation functions such as SUM, since they have a bigger impact on the SUM aggregate. Unequal sampling probabilities are widely used for statistical estimators \cite{sampford1967sampling,brewer2013sampling}.

\end{itemize}

\para{How are the weights expressed?}
It is assumed in this work that the join result weights come from functions provided by the user that typically map each column value to a weight and the total weight is then the product of these column value weights. For instance, for a join of a table that keeps track of sales with a table that details everything about the sold items, the user can decide to weigh the sample based on number of sales multiplied by the price of the items. While the proposed approach also supports the less common case, where rows of base tables are given weights that are not products of column-based weights, it does not go as far as supporting arbitrary weights for join rows. 

\subsection{Contributions.} 
An overview of the proposed approach can be found in Figure~\ref{fig:overview}. Weighted random sampling over the join rows is formalised through a \emph{multinomial sampling design} \cite{kesten1959property,bloch1967bayesian,ruiz2008bayesian,emigh1983number,hoeffding1965asymptotically,cressie1984multinomial}, which samples with-replacement with probabilities proportional to weights. This means each sampled join row is sampled independently and there can be repetitions of join rows.  In order to compute aggregate weights of by-table-groups the problem is formulated as a graph problem that can be solved with an approach akin to dynamic programming. Unlike prior work \cite{DBLP:conf/sigmod/ChaudhuriMN99,shanghooshabad2021pgmjoins,DBLP:conf/sigmod/ZhaoC0HY18}, the proposed framework for multi-way joins (cf. Section~\ref{sec:main}) is sufficiently expressive to also support outer joins and semi/anti-joins and only requires stream-based access to the tables. The aggregate weights are then plugged into a multinomial sampler. For this purpose an online multinomial sampler is proposed (cf. Section~\ref{sec:general}) to collect multiple sampled join rows at once without requiring multiple passes over the main table. The online multinomial sampler faces a unique set of problems and although some works deal with simpler instances of the problem \cite{jayaram2019weighted,startek2016asymptotically}, the problem has not been previously solved apart from some naive methods. Finally it is shown how to correctly and efficiently utilise statistical methods to check if a sample follows the intended distribution (cf. Section~\ref{sec:gof}), discuss how everything ties together with existing works (cf. Section~\ref{sec:related}) and conduct an extensive experimental evaluation (cf. Section~\ref{sec:experiments}).

The experimental findings are that the proposed sampling approaches require much less memory than building indexes, typically by a factor of 10.  
Even if one is fortunate enough to have indices already in place, then the proposed approach is actually not much slower. However, in the more likely case that pre-built indexes are not available, the proposed approaches are up to two times faster.

\subsection{Limitations.} 

As an open problem remain general cyclic joins that might be insurmountable due to their relationship with other hard problems such as triangle counting (see also \cite{DBLP:conf/icdt/Chen020}). Essentially, the state-of-the-art as in \cite{DBLP:conf/sigmod/ZhaoC0HY18,shanghooshabad2021pgmjoins} is to reduce the problem to acyclic joins and then purge rows violating the cyclic join. Clearly, this presupposes that a reasonable fraction of the rows from some acyclic join does not violate the cyclic join. Hence, if this gamble works out or not is completely data-dependent. For sake of completeness, strategies for cyclic joins are also laid out and evaluated (see Section~\ref{sec:cyclic}), but as there is only limited benefit to solving some handpicked problem instances that are unlikely to be representative, the focus of this work lies primarily on acyclic joins. Such joins are challenging on their own in case of many-to-many relations (cf. Figure~\ref{fig:manytomany}) that \REV{can} lead to a vast increase in the join size (cf. Table~\ref{tab:joinsizes}).

\subsection{Background on Join Sampling} \label{sec:joinsamplingbackground}
The naive approach to first compute the join and then sample from the join result (either using inversion sampling or weighted reservoir sampling \cite{DBLP:journals/ipl/EfraimidisS06}) largely defeats the purpose.  To first sample and then join only produces a (simple) random sample if all join attribute values are distinct, but even then it is typically not useful as the resulting sample is too small (see also \cite{DBLP:conf/sigmod/ChaudhuriMN99, DBLP:conf/sigmod/AcharyaGPR99a}). 

Inversion sampling (popularised by Smirnov \cite{smirnov1939estimation} over 80 years ago) is as simple as uniformly drawing a random number between $0$ and $1$ and plugging it into the inverse of the cumulative distribution function (cf. Figure~\ref{fig:inverse}). The uniformly drawn random number is used as a normalised rank and the join row at that rank is retrieved using the inverse function.
While for joins with just a single join attribute one could sort all base tables by that attribute and systematically jump through the join results, it is not clear if this could be done for multi-way joins. 

Rejection sampling (popularised almost 70 years ago by Von Neumann~\cite{von195113}) modifies the ratios between selection probabilities by rejecting each element with an individual probability. For instance, suppose one element is twice as likely as the other one, but one would like to reverse this ratio, one could always accept the initially less likely one and reject the other with probability $3/4$. Similarly, if one wants to go from simple random sampling to weighted random sampling, one could always accept the element with the maximal weight and accept an element with half the maximal weight half the time. The acceptance rate is then roughly the ratio between average and maximal weight, which is often very small. 

Join synopses (introduced by Achyara et al. \cite{DBLP:conf/sigmod/AcharyaGPR99a} over 20 years ago) focus on so-called foreign key joins where one can order the tables, such that each row in the first table joins with exactly one row in the second table and so on. Thus, one can just sample from the first table of that order and follow the links (see also Linked Bernoulli Synopses \cite{DBLP:conf/ssdbm/GemullaRL08}). As user-defined weights (or even selections) can destroy these links and become known at query time, this approach would need to be used with very large sample sizes and augmented with rejection sampling to utilise it in the weighted setting.

Zhao et al's  recent approach \cite{DBLP:conf/sigmod/ZhaoC0HY18} has been the first method for random sampling over multi-way joins. 
The basic idea is to start out with an approximate distribution (or the exact distribution as in the ``exact weights'' method) generalising ideas from Chaudhuri \cite{DBLP:conf/sigmod/ChaudhuriMN99} and then applying rejection sampling as in Olken's method \cite{olken1993random} to obtain the correct distribution. 
While this generalises both Chaudhuri's and Olken's method into a single framework for multi-way joins, the approach heavily depends on indexes for join attributes, solely focusing on sampling times, while other key practical considerations (such as memory usage and index availability and building time/space costs) are overlooked. The single other approach is \cite{shanghooshabad2021pgmjoins} which interestingly links the problem to probabilistic graphical models, but is again limited to uniform weights with a singular focus on query-time.
Critically, \REV{as published} these methods are limited to simple random sampling.

Another important concept is \textit{superset sampling}. 
In this context the superset would contain all join results, but also  additional cross product rows that violate some join conditions. As non-join rows are dropped later-on, the final probability of selecting them is zero and any proportions between join row probabilities are maintained.  For instance, the results of a cyclic join are a subset of results of an acyclic join that satisfies an additional join condition. As pointed out in \cite{DBLP:conf/sigmod/ZhaoC0HY18}, one can sample from a superset acyclic join and then drop any non-cyclic results. 
In this work, this idea is used for acyclic joins to reduce the memory footprint. If join conditions over equal attribute values are replaced with join conditions over equal hash values of attribute values, one can reduce the bookkeeping from per-attribute-value to per-hash-value and vastly reduce the memory footprint.

\tikzset{childarrow/.style={->}}
\tikzset{fkarrow/.style={red, dashed, thick}}

\tikzset{root/.style={rectangle, draw=black,fill=yellow!20, minimum width=1cm}}
\tikzset{child/.style={rounded corners, rectangle, draw=black, node distance=1.25cm}}

\tikzset{table/.style={matrix of nodes,align=left,inner sep=0.05cm, column sep=0cm, row sep=0cm, draw=black,
column 1/.style={anchor=base west},
    column 2/.style={anchor=base west},
    column 3/.style={anchor=base west}
},}

\begin{figure}
\centering
\begin{minipage}{0.22\textwidth}
\subcaption{Many-to-One}
\centering
\begin{tikzpicture}[scale=0.8]
\node[root](abc) {$AB$};
\node[child](bf) [right of=abc]{$\key{B}C$};
\draw[childarrow] (abc) --(bf) ;
\node[child](fg) [right of=bf]{$\key{C}D$};
\draw[childarrow] (bf) --  (fg);
\draw[fkarrow] ($(abc)+(0,-0.17)$) -- ($(abc)+(0,-0.4)$)  -- ($(bf)+(-0.12,-0.4)$) edge[->] ($(bf)+(-0.12,-0.17)$);
\draw[fkarrow] ($(bf)+(0.2,0.17)$) -- ($(bf)+(0.2,0.4)$)  -- ($(fg)+(-0.12,0.4)$) edge[->] ($(fg)+(-0.12,0.17)$);
\end{tikzpicture} 
$$ \pi_{A,B}(AB \bowtie BC \bowtie CD) = AB$$
\end{minipage}
\begin{minipage}{0.22\textwidth}
\subcaption{Many-to-Zero/One}
\centering
\begin{tikzpicture}[scale=0.8]
\node[root](abc) {$AB$};
\node[child](bf) [right of=abc]{$\key{B}C$};
\draw[childarrow] (abc)--  (bf);
\node[child](fg) [right of=bf]{$\key{C}D$};
\draw[childarrow] (bf) --  (fg); 
\end{tikzpicture}
$$ \pi_{A,B}(AB \bowtie BC \bowtie CD) \subseteq AB$$
\end{minipage}
\begin{minipage}{0.22\textwidth}
\subcaption{Many-to-Many with Keys}
\centering
\begin{tikzpicture}[scale=0.8]
\node[root](abc) {$AB$};
\node[child](bf) [right of=abc]{$\key{B}C$};
\draw[childarrow] (abc)--  (bf);
\node[child](fg) [right of=bf]{$CD$};
\draw[childarrow] (bf) --  (fg); 
\draw[fkarrow] ($(abc)+(0,-0.17)$) -- ($(abc)+(0,-0.4)$)  -- ($(bf)+(-0.12,-0.4)$) edge[->] ($(bf)+(-0.12,-0.17)$);
\end{tikzpicture}
\end{minipage}
\begin{minipage}{0.22\textwidth}
\subcaption{Many-to-Many}
\centering
\begin{tikzpicture}[scale=0.8]
\node[root](abc) {$AB$};
\node[child](bf) [right of=abc]{$BC$};
\draw[childarrow] (abc)--  (bf);
\node[child](fg) [right of=bf]{$CD$};
\draw[childarrow] (bf) --  (fg); 
\end{tikzpicture} 
\end{minipage}
\caption{Taxonomy of different types of joins \label{fig:manytomany}}
\end{figure}
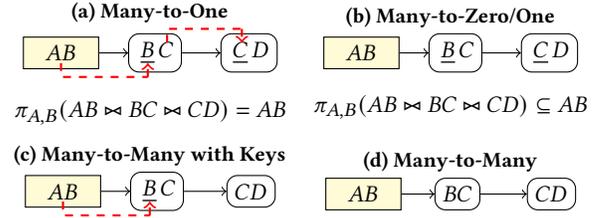

\section{Problem Setting}
\label{sec:problem}

\begin{figure}[t]
\begin{tikzpicture}[scale=0.9]
\input{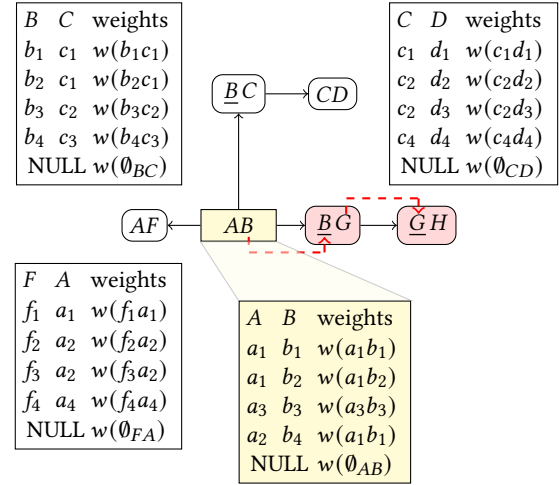}
\end{tikzpicture}
    \caption{Running example: Acyclic six-way join $(FA \fullouter AB \rightouter BC \semijoin CD) \join AG \join GH$ where $B$ from $AB$'s $B$ is a foreign key in $BG$ and $BG$'s $G$ is a foreign key in $GH$. $BC$'s $B$ is simply a key.
    \label{fig:sevenway}}
\end{figure}
\begin{figure}
\centering
\begin{tikzpicture}[scale=3.8]

\tikzstyle{eventline}=[black!60,ultra thin];


\draw (-0.2, 0) -- (-0.25, 0) node[anchor=east] {$0$};
\draw (-0.2, 1) -- (-0.25, 1) node[anchor=east] {$1$};


\draw (0, 0.0) -- (-0.025, 0.0) node[anchor=east] {\tiny $0.0$};
\draw (0, 0.2) -- (-0.025, 0.2) node[anchor=east] {\tiny$0.2$};
\draw (0, 0.4) -- (-0.025, 0.4) node[anchor=east] {\tiny$0.4$};
\draw (0, 0.6) -- (-0.025, 0.6) node[anchor=east] {\tiny$0.6$};
\draw (0, 0.8) -- (-0.025, 0.8) node[anchor=east] {\tiny$0.8$};
\draw (0, 1.0) -- (-0.025, 1.0) node[anchor=east] {\tiny$1.0$};

\draw (-0.1, 1.1) node[anchor=west] { $F(a) = \sum_{i = 1}^{a} p_i$};



\newcommand{\xxa}{0.0}
\newcommand{\xxb}{0.1}
\newcommand{\xxc}{0.2}
\newcommand{\xxd}{0.3}
\newcommand{\xxe}{0.4}
\newcommand{\xxf}{0.5}
\newcommand{\xxg}{0.6}
\newcommand{\xxh}{0.7}
\newcommand{\xxi}{0.8}
\newcommand{\xxj}{0.9}
\newcommand{\xxl}{1.0}

\newcommand{\yya}{0.05}
\newcommand{\yyb}{0.2}
\newcommand{\yyc}{0.4}
\newcommand{\yyd}{0.45}
\newcommand{\yye}{0.475}
\newcommand{\yyf}{0.5}
\newcommand{\yyg}{0.6}
\newcommand{\yyh}{0.8}
\newcommand{\yyi}{0.95}
\newcommand{\yyj}{0.978}
\newcommand{\yyl}{1.0}

\newcommand{\xxs}{1.0}
\newcommand{\xxt}{1.2}

\draw[draw=none] (\xxt,1.1*1/11) rectangle (\xxt+0.35, 1.1*0/11-0.1) node[pos=0.5]{\emph{events}};


\draw[fill=yellow!10,draw=none] (\xxs,\yyg) -- (\xxt, 0.1+1.1*7/11) -- (\xxt, 0.1+1.1*7/11) -- (\xxt+0.35, 0.1+1.1*7/11)--
(\xxt+0.35, 0.1+1.1*8/11) -- (\xxt, 0.1+1.1*8/11)-- (\xxs, \yyh) -- cycle;

\draw[eventline] (\xxs,0) -- (\xxt, 0.1+1.1*0) -- (\xxt+0.35, 0.1+1.1*0);
\draw[eventline] (\xxs,\yya) -- (\xxt, 0.1+1.1*1/11)-- (\xxt+0.35, 0.1+1.1*1/11);
\draw[eventline] (\xxs,\yyb) -- (\xxt, 0.1+1.1*2/11) -- (\xxt+0.35, 0.1+1.1*2/11);
\draw[eventline] (\xxs,\yyc) -- (\xxt, 0.1+1.1*3/11)-- (\xxt+0.35, 0.1+1.1*3/11);
\draw[eventline] (\xxs,\yyd) -- (\xxt, 0.1+1.1*4/11)-- (\xxt+0.35, 0.1+1.1*4/11);
\draw[eventline] (\xxs,\yye) -- (\xxt, 0.1+1.1*5/11)-- (\xxt+0.35, 0.1+1.1*5/11);
\draw[eventline] (\xxs,\yyf) -- (\xxt, 0.1+1.1*6/11)-- (\xxt+0.35, 0.1+1.1*6/11);
\draw[eventline] (\xxs,\yyg) -- (\xxt, 0.1+1.1*7/11)-- (\xxt+0.35, 0.1+1.1*7/11);
\draw[eventline] (\xxs,\yyh) -- (\xxt, 0.1+1.1*8/11)-- (\xxt+0.35, 0.1+1.1*8/11);
\draw[eventline] (\xxs,\yyi) -- (\xxt, 0.1+1.1*9/11)-- (\xxt+0.35, 0.1+1.1*9/11);
\draw[eventline] (\xxs,\yyj) -- (\xxt, 0.1+1.1*10/11)-- (\xxt+0.35, 0.1+1.1*10/11);
\draw[eventline] (\xxs,\yyl) -- (\xxt, 0.1+1.1*11/11)-- (\xxt+0.35, 0.1+1.1*11/11);

\draw[draw=none, gray] (\xxt,0.1+1.1*1/11) rectangle (\xxt+0.1, 0.1+1.1*0/11) node[pos=0.5, anchor=west]{\scriptsize$X = \beta_1$};
\draw[draw=none, gray] (\xxt,0.1+1.1*2/11) rectangle (\xxt+0.1, 0.1+1.1*1/11) node[pos=0.5, anchor=west]{\scriptsize$X = \beta_2$};
\draw[draw=none, gray] (\xxt,0.1+1.1*3/11) rectangle (\xxt+0.1, 0.1+1.1*2/11) node[pos=0.5, anchor=west]{\scriptsize$X = \beta_3$};
\draw[draw=none, gray] (\xxt,0.1+1.1*4/11) rectangle (\xxt+0.1, 0.1+1.1*3/11) node[pos=0.5, anchor=west]{\scriptsize$X = \beta_4$};
\draw[draw=none, gray] (\xxt,0.1+1.1*5/11) rectangle (\xxt+0.1, 0.1+1.1*4/11) node[pos=0.5, anchor=west]{\scriptsize$X = \beta_5$};
\draw[draw=none, gray] (\xxt,0.1+1.1*6/11) rectangle (\xxt+0.1, 0.1+1.1*5/11) node[pos=0.5, anchor=west]{\scriptsize$X\ldots$};
\draw[draw=none, gray] (\xxt,0.1+1.1*7/11) rectangle (\xxt+0.1, 0.1+1.1*6/11) node[pos=0.5, anchor=west]{\scriptsize$X = \beta_{A-1}$};
\draw[draw=none] (\xxt,0.1+1.1*8/11) rectangle (\xxt+0.1, 0.1+1.1*7/11) node[pos=0.5, anchor=west]{\scriptsize$X = \beta_A$};
\draw[draw=none, gray] (\xxt,0.1+1.1*9/11) rectangle (\xxt+0.1, 0.1+1.1*8/11) node[pos=0.5, anchor=west]{\scriptsize$X = \beta_{A+1}$};
\draw[draw=none, gray] (\xxt,0.1+1.1*10/11) rectangle (\xxt+0.1, 0.1+1.1*9/11) node[pos=0.5, anchor=west]{\scriptsize$X\ldots$};
\draw[draw=none, gray] (\xxt,0.1+1.1*11/11) rectangle (\xxt+0.1, 0.1+1.1*10/11) node[pos=0.5, anchor=west]{\scriptsize$X = \beta_N$};

\draw[fill=black!5] (\xxa,\yya) -- (\xxb, \yya) -- (\xxb, \yyb) --  (\xxc, \yyb) -- (\xxc, \yyc) -- (\xxd, \yyc)  -- (\xxd, \yyd)-- (\xxe, \yyd) -- (\xxe, \yye) -- (\xxf, \yye) -- (\xxf, \yyf) -- (\xxg, \yyf) -- (\xxg, \yyg) -- (\xxh, \yyg) -- (\xxh, \yyh) -- (\xxi, \yyh) -- (\xxi, \yyi) -- (\xxj, \yyi) -- (\xxj, \yyj)  -- (\xxl, \yyj) -- (\xxl, \yyl) -- (1.0,1.0) -- (1.0, 0.0) -- (0.0, 0.0) -- cycle;

\draw[fill=black] (-0.2, 0.75) circle (0.02cm) node[anchor=east]{$u \sim U(0,1) \,$};

\draw[dashed] (-0.2, 0.75) -- (0.7, 0.75);
\draw[fill=black] (0.7, 0.75) circle(0.01cm);
\draw [thick, decorate,decoration={brace,amplitude=5pt},xshift=0.75pt,yshift=0pt]
(0.7,\yyh-0.01) -- (0.7,\yyg+0.01) node [black,midway,xshift=12pt] {\footnotesize
$p_A$};

\draw[thick] (0.7, 0.01) -- (0.7, -0.01);
\draw[dashed] (0.7, 0.75) -- (0.7, 0.0) node [anchor=north]{$X = {F}^{-1}(u)$};



\draw (-0.2, 0) -- (-0.2, 1);

\draw[thick] (0,0) rectangle (1,1);

\end{tikzpicture}
\caption{\emph{Inversion sampling} draws from a distribution with distribution function $F$ by generating a random value $u \in [0,1]$ and then selecting $F^{-1}(u) = \argmin_x \{ F(x) | F(x) \ge u \}$. \label{fig:inverse}} 
\end{figure}
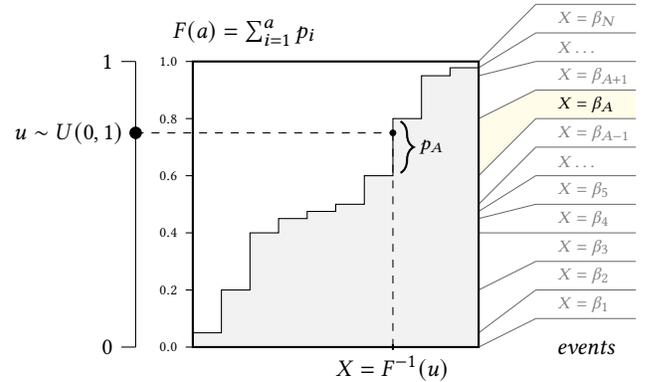

The running example throughout the paper can be found in Figure~\ref{fig:sevenway}, where the foreign-key appendages $BG$ and $GH$ are often omitted as they can be thought of as additional columns of $AB$. Since the example also features outer joins, null values are treated like special rows that can be assigned different weights for each table. The table $AB$ will serve as the main table throughout this work, which is typically picked as the largest table. In the proposed stream method (Section~\ref{sec:main}), only one sequential scan over the main table is needed, while two scans are needed for the other tables.

\subsection{Formal Definition}

\begin{definition}[user-defined weight function]\label{def:weights}
A user-defined weight function $w$ maps any base table row (vector) $\rho$ to a user-defined positive real $w(\rho)$ and for any combination of base table rows $w(\rho \times \tau \times \ldots \times \gamma) = w(\rho) \cdot w(\tau) \cdot \ldots \cdot w(\gamma)$.
\end{definition}

It is often convenient to express the weights over base table rows as products of per-column weights, which if not specified are simply set to $1$. Some care is required to not ``double-count''  join columns as they appear in multiple base tables. Selection operators over base tables can be expressed as weight functions, i.e., selected rows are mapped to $1$ and non-selected rows are mapped to $0$.

\begin{definition}[weighted random sample (with-replacement) over join queries]
A weighted sample (with-replacement) of size $n$ from a set of rows $\{\beta_1, \beta_2, \ldots, \beta_N\}$ with weighting function $w$ is a (sparsely represented) frequency vector $[x_1, x_2, \ldots, x_N]$ that follows a multinomial distribution with $n$ trials and normalised weights as event probabilities $p_1, p_2, \ldots, p_N$, i.e., $p_i = w(\beta_i) / \sum_{j = 1}^N w(\beta_j)$.
\end{definition}
 
 The rows $\{\beta_1, \beta_2, \ldots, \beta_N\}$ are in this case the result of the join query. If there are just two rows, the multinomial distribution has just two events and becomes simply a binomial distribution. Simple random sampling is a special case with $p_i = 1/N$. \REV{An acyclic/cyclic join query is in this work defined via a graph where each node is a table and each edge indicates that two tables are joined. In case of an acyclic join query, the join graph is a tree wheres in case of a cyclic join query, the join graph has a cycle (e.g., in Figure~\ref{fig:tpch} ``QX'' is acyclic, whereas ``QY'' is cyclic.)}.

\subsection{Desiderata}

\smallskip
\noindent
\textbf{Weighted:}
A \textit{weighted} sampler independently samples join rows with probability proportional to the product of arbitrary base table row weights.
\smallskip

Note that this definition is limited to weights that can be factorised over base tables, because sampling decisions otherwise would have to be postponed until a join row is completed. Arbitrary weights can still be achieved in post-processing through rejection sampling, but the effectiveness of the post-processing technique will depend on how close the existing probabilities come to the needed ones. One would expect this gap to be significantly decreased for factorised weights that aim to approximate the complex weighting functions.

\smallskip\noindent\textbf{Efficient:}
An efficient sampler performs for acyclic joins $O(T+n)$ operations and stores at most $O(T)$ values, where $T$ is the maximal table size and $n$ is the sample size.
\smallskip

As any sampler needs to process all base tables and generate all samples, $O(T+n)$ is optimal. 
While storing $O(T)$ values is not optimal, it is not an onerous requirement. 
The restriction to acyclic join queries is needed, as it is not known (and one might conjecture the opposite) if such a sampler can exist for cyclic join queries. 

\smallskip\noindent\textbf{Ad-hoc:}
An ad-hoc sampler behaves as if the data is generated at query time.
\smallskip

While the data likely exists in some form before query-time, in many cases it is only brought into the query-relevant form at  query time, as it either needs to be assembled or modified. 
For instance, different users might have different access to the data and some sensitive parts might have to be masked before it is handed over to the user. 
The ad-hoc property excludes precomputed indexes or other data preprocessing techniques. 
In this way, the ``ad hoc'' desideratum avoids the time/space costs related to building and storing indexes.

\smallskip\noindent\textbf{Streaming:}
A streaming sampler stores no previously seen rows completely unless they are sampled, accesses all tables at most twice as an unordered stream of rows and accesses one of the tables at most once.
\smallskip

The streaming setting typically implies just one pass over the data with insufficient memory to record the stream. When joining streams of rows, this is generally not possible as each new streamed row can turn any previous non-join row of another stream into a join row. For this reason, the single-pass condition only applies to one stream and allows other streams to be passed twice, which is optimal. Multi-pass streaming is still desirable as stream access is a universal interface and lends itself to embarrassingly parallel algorithms.

\newcommand{\YES}{\checkmark}
\newcommand{\NO}{}

\begin{table}
\caption{Desiderata Checklist \label{tab:desi}}
\begin{tabular}{rcccccc}
\toprule
Approach    & Weighted & Efficient & Ad-Hoc & Streaming \\
\midrule
\multicolumn{5}{c}{\emph{only two-way joins}} \\
\midrule
Olken'93 \cite{olken1993random}  & \YES & \NO & \YES & \YES \\
Chaudhuri'99 \cite{DBLP:conf/sigmod/ChaudhuriMN99}  & \NO & \YES & \NO & \YES \\
\midrule
\multicolumn{5}{c}{\emph{multi-way joins}} \\
\midrule
With-Index \cite{DBLP:conf/sigmod/ZhaoC0HY18,shanghooshabad2021pgmjoins} & \NO & \YES & \NO & \NO \\
Rejection \cite{von195113} & \YES & \NO & \YES & \NO \\
Inversion \cite{smirnov1939estimation} & \YES & \NO & \NO & \NO \\
\midrule
Proposed & \YES & \YES & \YES & \YES \\
\bottomrule
\end{tabular}
\end{table}

\medskip
In this work, a ``practical sampler'' is sought after that satisfies all these outlined properties, practical in the sense of a fully functional implementation with an empirical performance that is reflective of the asymptotic complexities, i.e., no large hidden constants.  As shown in Table~\ref{tab:desi} none of the existing works manages to satisfy all of these properties even for equal probabilities. \REV{Prior works do not support weighted sampling as published}\footnote{\REV{In the experimental section of this work one can find an implementation that modifies the proposed methods to use an index. While it comes fairly close to generalising \cite{DBLP:conf/sigmod/ZhaoC0HY18} to weighted sampling, it admittedly is more of a generalisation of proposed methods to using an index. Thus, it could be speculated that it should not be too hard to generalise prior works, but it has not been rigorously explored in this work.}}.  As mentioned, \cite{DBLP:conf/sigmod/ZhaoC0HY18,shanghooshabad2021pgmjoins} are efficient samplers, but assume memory-resident data with pre-computed indexes.

\newcommand{\texteachchild}{child $\mathcal{C}$ of $\mathcal{X}$ joined over $K$}
\newcommand{\eachchild}{\, child \, \mathcal{C} \, joined \, over \, K \,}

\section{Stream Join Sampler} 
\label{sec:main}

\subsection{Introduction}

\REV{
The basic idea of join sampling approaches is to extend a tuple table-by-table. For a uniform sample, the extension \emph{must} be drawn with probability proportional to the size of the (sub)join of a tuple with the tables downstream \cite{DBLP:conf/sigmod/ChaudhuriMN99, DBLP:conf/sigmod/ZhaoC0HY18}. This insight carries over to weighted join sampling, where the probability must be proportional to the total weight of the subjoin. In \cite{DBLP:conf/sigmod/ZhaoC0HY18} a tuple-oriented approach is taken to satisfy this constraint, where tuples are individually extended table-by-table, which is reliant on random access and index structures. Note that despite the use of indices \cite{DBLP:conf/sigmod/ZhaoC0HY18} still requires time linear in the size of the tables to collect the full sample. Hence, this work instead proposes a table-oriented approach, i.e., tables are scanned sequentially one-by-one. Intuitively, this is achieved by processing all extensions by one single table in one go akin to bread-first-search, rather than extending each tuple by all tables akin to depth-first-search. The foundation of such an approach is a proposed multipartite graph formulation that adds support to weighted sampling, non-equi joins, outer joins, selections, semi- and anti-joins (none of which are supported in prior work \cite{shanghooshabad2021pgmjoins,DBLP:conf/sigmod/ChaudhuriMN99,olken1993random, DBLP:conf/sigmod/ZhaoC0HY18}).
}


\para{Notation.}
Throughout this section, table names are used that match the columns, e.g., if a table has columns $A$ and $B$ the table will be called $AB$. If two tables have the same column name and are joined together, the join conditions will be to enforce equality across all columns with the same name as in a natural join. For attribute values of table rows for a column instead the lower case $x_1,x_2,\ldots$ is used to denote various independent values of a column $X$. 
\subsection{Multipartite Graph Formulation}

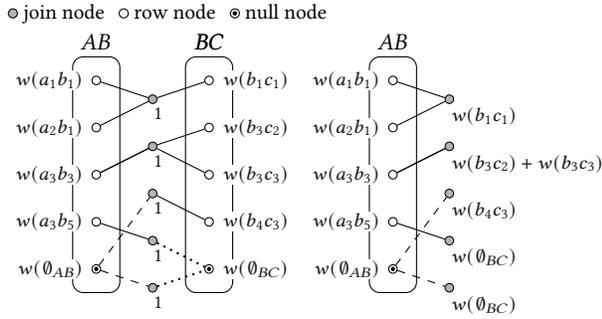
\begin{figure}
\begin{minipage}[b]{.94\linewidth}
{\small \begin{tikzpicture}\draw[black,fill=black!30] (0,0) circle [radius=0.06cm];\end{tikzpicture} join node \, \begin{tikzpicture}\draw[black] (0,0) circle [radius=0.06cm];\end{tikzpicture} row node \, \begin{tikzpicture}\draw[black] (0,0) circle [radius=0.06cm]; \fill[black] (0,0) circle [radius=0.03cm];\end{tikzpicture} null node }
\end{minipage}
\centering
\begin{minipage}[b]{.47\linewidth}
\centering \tikzstyle{op}= [matrix of nodes, inner sep=0.5cm,nodes = {style={ circle,fill=black!30,draw=black, inner sep=0.04cm}},column sep = 1.5cm, row sep = 0.75cm]
\tikzstyle{tab}= [rounded corners,op, draw=black, row sep = 0.75cm,nodes={style={ circle,fill=white,draw=black}}]
\tikzstyle{edge} = [color=black]
\tikzstyle{leftedge} = [color=black, dashed]
\tikzstyle{rightedge} = [color=black,  thick, dotted]
\tikzstyle{select} = [decorate, decoration={snake, amplitude=0.05cm}]
\tikzstyle{lab}= [, font=\footnotesize,node distance=0.1cm, anchor=north, pattern=crosshatch, pattern color=white, outer sep=0.001cm, inner sep=0.01cm]
\tikzstyle{wleft}= [, font={\footnotesize\color{black!90}},node distance=0.15cm, anchor=east, fill=white, outer sep=0.01cm, inner sep=0.01cm]
\tikzstyle{wright}= [, font={\footnotesize\color{black!90}},node distance=0.15cm, anchor=west, fill=white, outer sep=0.01cm, inner sep=0.01cm]
\tikzstyle{oplab}= [, font={\footnotesize\color{black!90}},node distance=0.1cm, anchor=north west, pattern=crosshatch, pattern color=white, outer sep=0.001cm, inner sep=0.01cm]
\tikzstyle{tabname}= [node distance=0.3cm, anchor=south]


\begin{tikzpicture}[xscale=0.9]
\matrix (ab) [tab, row sep = 0.5cm, inner sep=0.25cm]{ {}\\ {}\\ {}\\ {}\\ {}\\ };
\matrix (b) [op, xshift=0.75cm, yshift=-0.25cm,row sep = 0.5cm]{ {}\\ {}\\ {}\\ {}\\ {}\\ };
\matrix (bc) [tab, xshift=1.5cm,row sep = 0.5cm, inner sep=0.25cm]{ {}\\ {}\\ {}\\ {}\\ {}\\ };

\fill[black] (ab-5-1) circle [radius=0.03cm];
\fill[black] (bc-5-1) circle [radius=0.03cm];

\draw[edge] (ab-1-1) -- (b-1-1);
\draw[edge]  (ab-2-1) -- (b-1-1);
\draw[edge]   (ab-3-1) -- (b-2-1);
\draw[edge]    (ab-3-1) -- (b-2-1);
\draw[leftedge]  (ab-5-1) -- (b-3-1);

\draw[leftedge]  (ab-5-1) -- (b-5-1);
\draw[edge]  (ab-4-1) -- (b-4-1);
\draw[edge]  (b-1-1) -- (bc-1-1);
\draw[edge]  (b-2-1) -- (bc-2-1);
\draw[edge]  (b-2-1) -- (bc-3-1);
\draw[edge]  (b-3-1) -- (bc-4-1);
\draw[rightedge]  (b-4-1) -- (bc-5-1);

\draw[rightedge]  (b-5-1) -- (bc-5-1);
\node[above of=ab-1-1,tabname]{$AB$};
\node[above of=bc-1-1,tabname]{$BC$};

\node[left of=ab-1-1, wleft]{$w(a_1b_1)$};
\node[left of=ab-2-1, wleft]{$w(a_2b_1)$};
\node[left of=ab-3-1, wleft]{$w(a_3b_3)$};
\node[left of=ab-4-1, wleft]{$w(a_3b_5)$};
\node[left of=ab-5-1, wleft]{$w(\emptyset_{AB})$};

\node[below of=b-1-1, oplab]{$1$};
\node[below of=b-5-1, oplab]{$1$};
\node[below of=b-2-1, oplab]{$1$};
\node[below of=b-3-1, oplab]{$1$};
\node[below of=b-4-1, oplab]{$1$};

\node[above of=bc-1-1,tabname]{$BC$};
\node[right of=bc-1-1, wright]{$w(b_1c_1)$};
\node[right of=bc-2-1, wright]{$w(b_3c_2)$};
\node[right of=bc-3-1, wright]{$w(b_3c_3)$};
\node[right of=bc-4-1, wright]{$w(b_4c_3)$};
\node[right of=bc-5-1, wright]{$w(\emptyset_{BC})$};

\end{tikzpicture}
\subcaption{Each join node is the root of a sub-tree (left to right). }\label{fig:graphjoin}
\end{minipage}
\begin{minipage}[b]{.47\linewidth}
\centering \tikzstyle{op}= [matrix of nodes, inner sep=0.5cm,nodes = {style={ circle,fill=black!30,draw=black, inner sep=0.04cm}},column sep = 1.5cm, row sep = 0.75cm]
\tikzstyle{tab}= [rounded corners,op, draw=black, row sep = 0.75cm,nodes={style={ circle,fill=white,draw=black}}]
\tikzstyle{edge} = [color=black]
\tikzstyle{leftedge} = [color=black, dashed]
\tikzstyle{rightedge} = [color=black,  thick, dotted]
\tikzstyle{select} = [decorate, decoration={snake, amplitude=0.05cm}]
\tikzstyle{lab}= [, font=\footnotesize,node distance=0.1cm, anchor=north, pattern=crosshatch, pattern color=white, outer sep=0.001cm, inner sep=0.01cm]
\tikzstyle{wleft}= [, font={\footnotesize\color{black!90}},node distance=0.15cm, anchor=east, fill=white, outer sep=0.01cm, inner sep=0.01cm]
\tikzstyle{wright}= [, font={\footnotesize\color{black!90}},node distance=0.15cm, anchor=west, fill=white, outer sep=0.01cm, inner sep=0.01cm]
\tikzstyle{oplab}= [, font={\footnotesize\color{black!90}},node distance=0.1cm, anchor=north west, pattern=crosshatch, pattern color=white, outer sep=0.001cm, inner sep=0.01cm]
\tikzstyle{tabname}= [node distance=0.3cm, anchor=south]

\begin{tikzpicture}[xscale=0.9]
\matrix (ab) [tab, row sep = 0.5cm, inner sep=0.25cm]{ {}\\ {}\\ {}\\ {}\\ {}\\ };
\matrix (b) [op, xshift=0.75cm, yshift=-0.25cm,row sep = 0.5cm]{ {}\\ {}\\ {}\\ {}\\ {}\\ };

\fill[black] (ab-5-1) circle [radius=0.03cm];

\draw[edge] (ab-1-1) -- (b-1-1);
\draw[edge]  (ab-2-1) -- (b-1-1);
\draw[edge]   (ab-3-1) -- (b-2-1);
\draw[edge]    (ab-3-1) -- (b-2-1);
\draw[leftedge]  (ab-5-1) -- (b-3-1);
\draw[leftedge]  (ab-5-1) -- (b-5-1);
\draw[edge]  (ab-4-1) -- (b-4-1);
\node[above of=ab-1-1,tabname]{$AB$};

\node[left of=ab-1-1, wleft]{$w(a_1b_1)$};
\node[left of=ab-2-1, wleft]{$w(a_2b_1)$};
\node[left of=ab-3-1, wleft]{$w(a_3b_3)$};
\node[left of=ab-4-1, wleft]{$w(a_3b_5)$};
\node[left of=ab-5-1, wleft]{$w(\emptyset_{AB})$};

\node[below of=b-1-1, oplab]{$w(b_1 c_1)$};
\node[below of=b-5-1, oplab]{$w(\emptyset_{BC})$};
\node[below of=b-2-1, oplab]{$w(b_3 c_2)+w(b_3 c_3)$};
\node[below of=b-3-1, oplab]{$w( b_4c_3) $};
\node[below of=b-4-1, oplab]{$w(\emptyset_{BC})$};
\end{tikzpicture}
\subcaption{Weights of each sub-tree can be transferred into join node.}\label{fig:graphsemijoin}
\end{minipage}
\caption{ Graph formulation for two-way join\label{fig:graphformulation}}
\end{figure}

\begin{figure*}
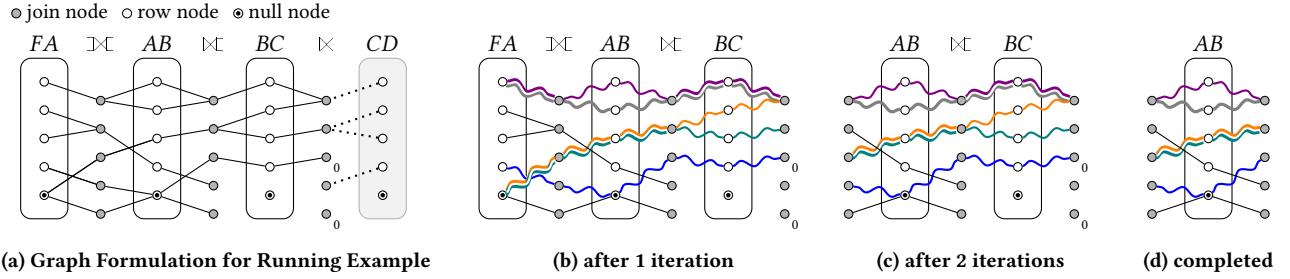

\begin{minipage}{0.95\linewidth}
{\small \begin{tikzpicture}\draw[black,fill=black!30] (0,0) circle [radius=0.06cm];\end{tikzpicture} join node \, \begin{tikzpicture}\draw[black] (0,0) circle [radius=0.06cm];\end{tikzpicture} row node \, \begin{tikzpicture}\draw[black] (0,0) circle [radius=0.06cm]; \fill[black] (0,0) circle [radius=0.03cm];\end{tikzpicture} null node }
\end{minipage}
\centering
\begin{minipage}[b]{.35\linewidth}
\centering \input{figures/multiway0}
\subcaption{Graph Formulation for Running Example}
\end{minipage}
\begin{minipage}[b]{.28\linewidth}
\centering \input{figures/multiway1}
\subcaption{after 1 iteration}
\end{minipage}
\begin{minipage}[b]{.20\linewidth}
\centering \input{figures/multiway2}
\subcaption{after 2 iterations}
\end{minipage}
\begin{minipage}[b]{.15\linewidth}
\centering \input{figures/multiway3}
\subcaption{completed \label{fig:completed}}
\end{minipage}
\caption{Algorithm~\ref{alg:weights}'s iterations (Lines~\ref{alg:iterbegin}-\ref{alg:iterend}) preserve the weight of result trees, but simplify the graph by removing partitions \label{fig:multiway} } 
\end{figure*}

\para{Graph formulation.} In order to more easily reason about not just inner joins, but also outer joins, it is useful to formulate the join operation through a graph, where rows and join attribute values form a multipartite graph with one partition of nodes per table and join column.

An example of such a multipartite graph for a two-way join can be found in Figure~\ref{fig:graphformulation}. The nodes of the graph are:

\begin{itemize}
\item \emph{Row nodes}: Each table is a set of nodes, one node for each row, e.g., there is a node for row $a_1 b_1$. 
\item \emph{Null nodes}: ``NULL'' values for a table are presumed to be an additional row of the table, e.g., $\emptyset_{AB}$.
\item \emph{Row node labels}: Each row node is labelled with the row weight, e.g., $w(a_1b_1)$, even null nodes, e.g., $w(\emptyset_{AB})$. 
\item \emph{Join nodes}: Each join column is a set of nodes, one node for each attribute value, e.g., $b_1$.
\item \emph{Join node labels}: Each join node is initially labelled with the value $1$, e.g., the node for $b_1$ has a label $1$.
\end{itemize}

As join operators are not necessarily symmetrical, there is a ``left'' and a ``right'' side. The edges of the graph potentially connect join nodes with row nodes with matching join attribute value, but it depends on the join operator if they do:

\begin{itemize}
\item Each join node for $\leftouter$,  $\rightouter$, $\fullouter$ and $\join$ is connected to each matching row node.
\item Each join node for $\rightouter$ and $\fullouter$ that is not connected to any row node on the left side is connected to the left null node.
\item Each join node for $\leftouter$ and $\fullouter$ that is not connected to any row node on the left side is connected to the right null node.
\item Each join node for $\semijoin$ has label $0$ if there is \emph{no} match on the right side. 
\item Each join node for $\antijoin$ has label $0$ if there \emph{is} a match on the right side.
\end{itemize}

This simply means left outer joins ($\leftouter$) allow null values on the left side and right outer joins ($\rightouter$) on the right side, while
full outer joins allow them on either side ($\fullouter$). Semijoins ($\semijoin$) and antijoins ($\antijoin$) are filters that change the weight of a join node depending on if there is a match on the right side.

The main motivation for this formulation, is that in such a graph, the trees spanning all partitions corresponds to join rows:

\begin{itemize}
\item \emph{Result trees}: Each tree that contains exactly one node from each reachable partition corresponds to a join row (right side partitions of semijoins or antijoins are unreachable). 
\item Trivial rule: A result tree cannot contain join nodes that are only connected to null nodes.
\end{itemize}

The weighted sampling problem can then be posed via the multipartite graph formulation:

\begin{definition}[Weighted Sampling over Joins]

Select a random result tree $R$ (corresponding to a join row) comprised of nodes with labels $w_1, \ldots, w_k$ with probabilities proportional to the tree's total weight $\prod_{i = 1}^k w_i$.

\end{definition}

As there are as many result trees as join rows, some method is needed to reduce the complexity noticeably below the number of join rows. As each tree is sampled with probabilities proportional to their weight, any group of trees will be sampled with probabilities proportional to their sum of weights. Thus, one can sample the result tree step by step rather than in one go, which will be shown in the following to allow for sub-linear complexity.

\para{Efficient solution to graph problem.}  The basic idea is to group all result trees that have the same root and then sample such a root-group. The challenge left to solve is how to compute the sum of weights of each root-group. The right side of Figure~\ref{fig:graphformulation} shows the most crucial primitive for this purpose. The operation sums the weights of all sub-trees of each join node and adds it as a new label of the join node. Such an operation can be implemented through a linear-time sequential scan of  $BC$ and generating a hash table of $B$-values contained in $BC$ where each entry holds the sum of observed weights of each $B$-value. In SQL-terms, one can think of such an operation as something along the lines of {\tt{Select SUM(W) from BC group by B}} where $W$ is presumed to be some column holding the weights of $BC$. As a second step, one can scan $AB$, which holds the root nodes. Now, for each row $a_ib_j$ of $AB$ one can look up the sum of sub-tree weights $W(b_j)$ in the hash table entry of the key $b_j$. The total weight of all groups that contain $a_ib_j$ is then $w(a_ib_j) W(b_j)$. If $BC$ is not a leaf node, one simply recursively continues this procedure to compute the weights of $BC$ until the leaf nodes are reached. Once a group with $a_ib_j$ as a root node is sampled with probability proportional to its total weight, one can continue in a similar fashion with the children nodes (adjacent nodes in the graph facing away from the main table), until the full tree is obtained. Semi-joins, anti-joins and selections can be supported through weights, which is detailed later.

\begin{algorithm}
\SetAlgoLined
\DontPrintSemicolon
Let $G = (V,E)$ be a multipartite graph corresponding to the join query and $w(\rho)$ be a node $\rho$'s label. \;
Let $\textsc{root}(V) = \mathcal{A} \subseteq V$ be the row node partition of the join query's main table. \label{alg:root} \; 
Let $\textsc{parent}_\mathcal{A}(\mathcal{T}, V')$ be all neighbouring join node partitions in $V'$ closer to $\mathcal{A}$ than $\mathcal{T}$. \label{alg:parent} \;
Let $\textsc{children}_\mathcal{A}(\mathcal{T}, V')$ be all neighbouring join node partitions in $V'$ further away from $\mathcal{A}$ than $\mathcal{T}$. \label{alg:children}\;
Let $\textsc{leafs}_\mathcal{A}(V')$ be all row node partitions in $V'$ furthest away from $\mathcal{A}$, excluding $\mathcal{A}$. \label{alg:leafs}\;

Initiate $V'$ as $V$. \;
\While{ $|\textsc{leafs}_\mathcal{A}(V') | > 0$ \label{alg:iterbegin}}{

    Pick a row node partition $\mathcal{T} \in \textsc{leafs}_\mathcal{A}(V') $ \;
    Let $\mathcal{J}_0 = \textsc{parent}_\mathcal{A}(\mathcal{T}, V')$. \;
    Set the label of all nodes in $\mathcal{J}_0$ to $-\infty$ \;
    
    \ForEach{row node $\rho \in \mathcal{T}$ \label{alg:scanbegin}}{
    
        Initiate $W$ as $1$ \label{alg:weightbegin}\;
        \ForEach{join node part. $\mathcal{J} \in \textsc{children}_\mathcal{A}(\mathcal{T}, V')$}{
            Let $W_\mathcal{J}$ be the label of $j \in \mathcal{J}$ connected to $\rho$ \;
            Multiply $W$ by $W_\mathcal{J}$.
        }\label{alg:weightend}
        
        Let $W_{\mathcal{J}_0}$ be the label of $j \in \mathcal{J}_0$ connected to $\rho$ \label{alg:joinbegin}\;
        
        \lIf{ $W_{\mathcal{J}_0} = -\infty$ }{ Set $W_{\mathcal{J}_0}$ to $0$}
        
        Add $w(\rho) W$ to $W_{\mathcal{J}_0}$. \label{alg:joinend}\;
    } \label{alg:scanend}
     Remove row node partition $\mathcal{T}$ from $V'$ \label{alg:removebegin}\;
     \ForEach{join node partition $\mathcal{J} \in \textsc{children}_\mathcal{A}(\mathcal{T}, V')$}{
       Remove join node partition $\mathcal{J}$ from $V'$ \label{alg:removeend}\;
     }
    Replace all labels $-\infty$ with $w(\emptyset_{\mathcal{T}})$ for nodes in $\mathcal{J}_0$ \;
}\label{alg:iterend}
\caption{Group Weights} \label{alg:weights}
\end{algorithm}

\para{Algorithm.}
It is described in Algorithm~\ref{alg:weights} how the group weights can be obtained for general multi-way joins. Figure~\ref{fig:multiway} applies the algorithm for the running example. $AB$ serves as the main table (see Line~\ref{alg:root}), such that the parent moves closer to $AB$ (Line~\ref{alg:parent}) and children node move further away from $AB$ (Line~\ref{alg:children}), while leafs such as $CD$ are on the outskirts furthest away from $AB$ (Line~\ref{alg:leafs}). The algorithm processes a new table in each iteration (Lines ~\ref{alg:iterbegin}-\ref{alg:iterend}). It makes a single stream pass over the rows of the new table (Lines ~\ref{alg:scanbegin}-\ref{alg:scanend}, computes the total weight of the sub-trees rooted at the row (Lines~\ref{alg:weightbegin}-\ref{alg:weightend}) and add this sub-tree weight multiplied by its own weight $w(\rho)$ to the parent join node (Lines~\ref{alg:joinbegin}-\ref{alg:joinend}). After the new table has been processed it is removed from further consideration (Lines~\ref{alg:removebegin}-\ref{alg:removeend}) and the algorithm goes back to Line~\ref{alg:iterbegin} and terminates when no more tables are left for consideration. After termination, each row node $\rho$ of the main table is linked to multiple join nodes and the product of the join node labels multiplied by $w(\rho)$ yields the total weight of all results trees / join rows containing $\rho$.

The algorithm can be implemented with a hash table $H_\mathcal{J}$ for each join node partition $\mathcal{J}$ that allows an efficient lookup of the linked join node for a row node based on the join attribute value. Note that if all join nodes except a few have the same label, one only keeps entries for the exceptions and maintains a default value for the rest.

\REV{

The hash table entries for a table can be computed in one scan that skips any rows that do not satisfy the selection predicates.

In case of a semi-join, the default value is $0$ and only entries are $1$ for results of the semi join. While an anti-join can be as large as a table, for sampling it can be supported via semi-join: For anti-joins the default value is $1$ and only entries are $0$ for results of the semi join. 

Theta/non-equi-joins can also be easily supported. If the link condition is $\neq$, then in addition to the hash-table one needs to maintain the total weight of all hash-table entries. Then the weight for equi-joins can simply be subtracted from the total weight, to obtain the $\neq$-join weight. For theta joins with a binary operator $\odot \in \{ <, \le, \ge, >\}$ for the link condition, one can first obtain the equi-join hash table and then replace the values with cumulatives. This means that the entry for each join attribute value $x$ holds the sum of weights of any equi-join entry $y$ that satisfies $y \odot x$. Additionally a binary search tree needs to be constructed to efficiently find the last value $y \odot x$ where $x$ is the queried value from a joining table. Then the queried value can be rewritten as the last hash table value.

It is also possible to obtain some limited support for Group by's, as they can be performed after the sample is collected. In order to obtain a better coverage of groups, the weights should ideally be chosen inversely proportional to the size of the groups.
}

\subsection{Multistage Multinomial Sampling}

In the previous section weighted join sampling has been mapped to the problem of sampling result trees with probabilities proportional to their weight (in simple cases they are paths). In a usual implementation, each join node partition is implemented as a hash map, such that by going each row $\rho$ in the main table one can look up matching entries in the hash maps and compute the product of the looked up values and $w(\rho)$ to obtain the total weight $W(\rho)$ of all result trees that contain $\rho$. After executing Algorithm~\ref{alg:weights} which brings the graph into the state of Figure~\ref{fig:completed}, the total weight $W(\rho)$ of each $\rho$ can therefore be computed using at most one hash-table look-up per table.

 As the task is to sample with probabilities proportional to $W(\rho)$, one can in a first stage perform a stream pass over the main table and collect a sample using the proposed online multinomial sampler from Section~\ref{sec:general}. After collecting the main table sample, this yields the sampled groups of result trees grouped by the main table row and it is left sampling within the groups. In the graph formulation (see Figure~\ref{fig:graphformulation}), each row $\rho$ in the main table sample is on the ``left'' side, linked to a join node in the ``middle'' that links to multiple row nodes on the ''right side. The task is then to sample for each sampled main table row, a row node on the ``right'' side with probabilities proportional to the rest of the result tree weight. Luckily, the total weights of the sub-trees have been previously computed and are readily available in the hash maps for the join nodes. As the total weight $W$ of the right side is known, one can use here inversion sampling (see Figure~\ref{fig:inverse}), i.e., draw a random number $u$ between $0$ and $W$ and go through the right side table until the total weight of observed rows is more than $u$ and then pick the preceding row. This allows to collect all sample continuations of the main table sample in one stream pass. Thus, in each stage the rows in the sample are extended by the row of another table until all tables that participate in the join have been reached.  The main table due to the online multinomial sampler from Section~\ref{sec:general} is only scanned once, while all other tables are scanned twice. In the general case, the second scans of the other tables cannot be avoided. Hence, this approach is optimal with regards to the number of sequential scans.

\subsection{Handling Cyclic Joins} \label{sec:cyclic}

While the main focus of this paper are the more common acyclic joins, this section discusses the more exotic cyclic joins that sometimes receive some academic interest. As in the literature, cyclic joins are treated as a selection predicate over the join rows of an acyclic join.

For a join graph where each node is a table and each edge is a join condition, a join query is called cyclic
if the join graph features at least one cyclic path that passes through at least two edges and three nodes.

\para{Rewriting as Selection over Acyclic Join Query.}
Any cyclic join query such as $AB \bowtie BC \bowtie CA$ can be rewritten into a selection over an acyclic join, e.g., {$\sigma_{A_1 = A_2} A_1B \bowtie BC \bowtie CA_2$}. 
If a join query has multiple cycles, each cycle adds one equality predicate to the selection operator. Note that here a natural join formalisation is used and  join conditions are removed by renaming attributes.

If the selection has high selectivity, i.e., selects most of the records, then one can first sample from $A_1B \bowtie BC \bowtie CA_2$ and then drop any samples that do not satisfy the selection predicate $\sigma_{A_1 = A_2}$. The acceptance rate is then on expectation equal to the selectivity, i.e., ${|\sigma_{A_1 = A_2} A_1B \bowtie BC \bowtie CA_2|}\ /\ {|A_1B \bowtie BC \bowtie CA_2|}$. As the acceptance rate can vary wildly depending on which join conditions are outsourced as selections, this leads to an interesting problem of an optimal rewriting, i.e., one that maximises the selectivity. 
A simple heuristic to select which edge to remove from the cycle is to pick the join link $X \bowtie Y$ that maximises the probability
$|X \bowtie Y|/(|X| \cdot |Y|)$ that two independently drawn random records from both tables are linked. This is in principle very similar to the Chow-Liu algorithm \cite{DBLP:journals/tit/ChowL68} that aims to break links where the mutual information is lowest.

The rewriting of a cyclic join query into a selection over an acyclic join is described in the following. One goes through each node (table) and follows each of its adjacent edges (join conditions) to a neighbouring node. For each adjacent edge one looks for the shortest path between its two end nodes that is not allowed to pass the adjacent edge. If no such path can be found then the node does not participate in a cycle. If such a path can be found, the node participates in at least one cycle and one can pick an edge in that cycle that one wants to outsource as a selection predicate. After the removal of an edge, one can check for cycles again until the node no longer participates in a cycle. For a join graph with $|V|$ nodes and $|E|$ edges such a procedure performs at most ${|V|}^2 \mathcal{C}$ shortest path searches where $\mathcal{C}$ is the number of cycles. As join graphs are very small and the number of cycles is at most $2^{|V|^2-|V|}$, the searches only takes a couple of milliseconds to perform. A more challenging problem is how to pick which edge to remove. While one could estimate the selectivity for each candidate, that can be quite expensive and take more time than the subsequent sampling step.


\section{Economical Join Sampler}
\label{sec:economic}

The previous section presented the fundamentals for stream sampler.
It will now be considered how one can obtain an economic sampler with a reduced memory footprint for various types of joins.

\subsection{Exploiting Foreign Key Joins}

Many-to-one relations allow to first sample from the ``many'' table and then look-up the ``one'' entity in the other table(s). If the weights are all equal, then this reduces to sampling from one table and then joining the sample with the other tables. 
If the weights are not equal, one can either first proceed as if they were equal and afterwards employ rejection sampling to rectify inclusion probabilities, or one needs to find the group weights and treat the foreign-key join like a many-to-many join. 
The former is more memory efficient, whereas the latter is reliably fast. 
For the economical sampler one picks the more memory efficient variant.

\subsection{Simplifying Cyclic Joins}

While some join queries can look like a challenging cyclic join on first glance, it is sometimes possible to reduce it to a much simpler query, e.g., an acyclic one.  
If a subgraph of the join graph is a foreign-key join, one can efficiently join the subgraph together and the join result can only have as many rows as the largest of the joined tables. To automate and generalise this process without knowledge of foreign key constraints, one can simply join all tables together where the join result is at most slightly larger than the tables itself.  For table sizes up to $N$, this simplification step requires $O(N)$ expected time and $O(N)$ space using hash joins or $O( N \log N)$ worst-case time
and $O(N)$ space using sort-merge joins.

\subsection{Novel Hashed Join Method: Handling High-Cardinality Join Attributes without Approximations} 
In case there are many distinct attribute values over join attributes within very large base tables, additional techniques are needed to reduce the memory footprint. A good way to think about join rows, is as a subset of cross product rows that satisfy join conditions. By relaxing the equi-join condition to a equi-hash-join conditions, one obtains a superset:

\begin{definition}[Equi-Hash Join]
The equi-hash join along some join column between two tables $AB$ and $BC$ is the subset of the crossproduct of rows where
the hash value in the join column matches for a shared hash function.
\end{definition}

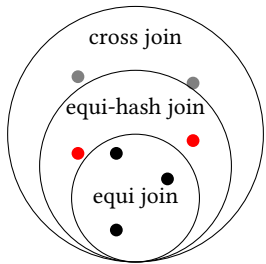
\begin{figure}
\begin{tikzpicture}[scale=0.85]
\draw (0,0) circle (1cm);

\fill[color=black] (0.5,0.3) circle (0.1cm);
\fill[color=black] (-0.3,-0.5) circle (0.1cm);
\fill[color=black] (-0.3,0.7) circle (0.1cm);

\fill[color=red] (-0.9,0.7) circle (0.1cm);
\fill[color=red] (0.9,0.9) circle (0.1cm);
\fill[color=gray] (-0.9,1.9) circle (0.1cm);

\fill[color=gray] (0.9,1.8) circle (0.1cm);

\draw (0,0) node[anchor=center] {equi join};
\draw (0,0.5) circle (1.5cm);
\draw (0,1.4) node[anchor=center] {equi-hash join};
\draw (0,1) circle (2cm);
\draw (0,2.5) node[anchor=center] {cross join};
\end{tikzpicture}
\caption{Hierarchy of join results. Samples from any equi-hash join with superflous elements purged (depicted in red) are ordinary samples from the equi join \label{fig:joinhierarchy}}
\end{figure}

Note that any result in the equi-join also must be in the equi-hash join, because if two attribute values are equal, so are their hashes (see also Figure~\ref{fig:joinhierarchy}\REV{)}. Thus, equi-hash joins reduce the number of join attribute values in a controlled way (controlled by the chosen hash function) by allowing to operate over the smaller number of hash values.  While the reduced domain leads to collisions, all those collisions only lead to some extraneous results that can be purged later on without any negative impact on precision. More specifically, the collisions cause additional rows from the cross product that violate the join conditions\REV{,} but are easy to identify during the post-processing of the sample. 

As the number of purged rows is not fully predictable, it can be necessary to generate more samples \REV{than} expected. To avoid this when possible, one can choose a sufficiently large sample. A common source for a large number of distinct values are join columns that are database keys. In that particular case, one can predict:

\begin{lemma} \label{lem:hash}
Let $T_1, T_2, \ldots, T_k$ be $k$ tables with unique values along the join columns.
The hash-relaxed equi-join $T_1 \bowtie_u T_2 \bowtie_u \ldots \bowtie_u T_k$ using a universal hash function with universe of size $u$ is expected to have at most $2m {(\frac{m}{u})}^{k-1}$ superfluous results that are not present in $T_1 \bowtie T_2 \bowtie \ldots \bowtie T_k$ where $m = \max(|T_1|, |T_2|, \ldots, |T_k|)$.
\end{lemma}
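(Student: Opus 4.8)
As already observed in the text, every tuple of the ordinary equi-join $T_1 \bowtie \cdots \bowtie T_k$ also survives in the equi-hash join (equal values have equal hashes), so the superfluous results are exactly the members of $T_1 \bowtie_u \cdots \bowtie_u T_k$ absent from $T_1 \bowtie \cdots \bowtie T_k$, and their number equals $|T_1 \bowtie_u \cdots \bowtie_u T_k| - |T_1 \bowtie \cdots \bowtie T_k|$. I would therefore bound $E\big[\,|T_1 \bowtie_u \cdots \bowtie_u T_k|\,\big]$ and note that the (deterministic) equi-join has at most $m$ tuples: since the join columns carry unique values, a single row of $T_1$ forces every remaining row of an equi-join tuple, so there are at most $|T_1| \le m$ of them.

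To bound the expected equi-hash-join size, write $\rho_i$ for the value a row of $T_i$ uses to join with $T_{i+1}$ and $\lambda_{i+1}$ for the value a row of $T_{i+1}$ uses to join with $T_i$, so that $(r_1,\dots,r_k)$ lies in the equi-hash join iff $h(\rho_i(r_i)) = h(\lambda_{i+1}(r_{i+1}))$ for all $i$. By linearity of expectation, $E\big[\,|T_1 \bowtie_u \cdots \bowtie_u T_k|\,\big]$ is the sum over all $k$-tuples of the probability of meeting these $k-1$ hash-equalities. Classify each tuple by the set $F \subseteq \{1,\dots,k-1\}$ of edges it violates exactly, i.e.\ with $\rho_i(r_i) \neq \lambda_{i+1}(r_{i+1})$. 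Then (a) deleting the edges of $F$ from the path $T_1 - \cdots - T_k$ leaves $|F|+1$ segments that are internally joined by genuine equalities, and by uniqueness each segment is pinned down by one of its rows, giving at most $m$ choices per segment and hence at most $m^{|F|+1}$ tuples with violation set exactly $F$; and (b) on each violated edge the required event is a hash collision of two distinct values, which has probability at most $1/u$ by universality, and since the edges of a chain touch pairwise-disjoint value-slots these combine to probability at most $u^{-|F|}$. Summing over $F$ (there are $\binom{k-1}{t}$ sets of size $t$) gives
\[
E\big[\,|T_1 \bowtie_u \cdots \bowtie_u T_k|\,\big] \;\le\; \sum_{t=0}^{k-1} \binom{k-1}{t}\, m^{t+1}\, u^{-t} \;=\; m\Big(1 + \tfrac{m}{u}\Big)^{k-1},
\]
so, dropping the $F=\emptyset$ (genuine) term, there are in expectation at most $m\big((1+\tfrac{m}{u})^{k-1} - 1\big)$ superfluous tuples; the stated $2m(m/u)^{k-1}$ would then come from bounding this expansion, a step I return to below.

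I expect step (b) to be the main obstacle: universality alone controls only pairwise collision probabilities, so to multiply the per-edge bounds one must either assume a hash family with enough independence or argue directly that along a chain the $k-1$ collision events act on disjoint portions of the hash domain for a uniform $h$ (and separately handle the corner case where a table's left and right join columns carry a common value, which couples two consecutive edges). A secondary point to pin down is the final reduction from $m\big((1+m/u)^{k-1}-1\big)$ to $2m(m/u)^{k-1}$: which term of the binomial expansion dominates depends on the ratio $m/u$, so obtaining exactly this closed form relies on the operative regime in which the ``all conditions collide'' contribution governs, and I would want to re-read the authors' accounting here.
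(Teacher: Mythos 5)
Your decomposition by the set $F$ of violated edges, with the uniqueness assumption pinning each genuinely-joined segment to at most $m$ choices, is the natural route, and for what it is worth the paper itself states this lemma without any proof (it is immediately followed by the heuristic sample-size recommendation), so there is no slicker argument of the authors' to compare against. But the two steps you defer are exactly where the claim lives, and neither goes through as stated. First, universality only bounds single collision events; to write $u^{-|F|}$ for the joint event you need (approximate) $k$-wise independence or a fully random hash, and even then adjacent violated edges share a table and may involve coinciding values, in which case the joint collision probability degrades to $u^{-(d-1)}$ where $d$ is the number of distinct values in the violated chain -- this can be salvaged (the tuple count shrinks correspondingly), but it is not a one-line appeal to universality.

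Second, and more seriously, the passage from your intermediate bound $m\bigl((1+\tfrac{m}{u})^{k-1}-1\bigr)$ to $2m(\tfrac{m}{u})^{k-1}$ is not a matter of re-reading the authors' accounting: it is false without a hypothesis on $m/u$, and so is the lemma as literally stated. Take $k=3$ with all values on the $T_2$--$T_3$ edge distinct (so the true join is empty) and a fully random hash: the expected number of superfluous tuples is on the order of $m^{2}/u$, which exceeds $2m^{3}/u^{2}$ whenever $u>m$. The bound only holds in the regime the surrounding text implicitly assumes (the paper says $k$ counts tables where $u$ is much smaller than $m$): if $m/u \ge 2(k-1)$ one can bound $\sum_{t=1}^{k-1}\binom{k-1}{t}(m/u)^{t} \le 2(m/u)^{k-1}$, recovering the stated constant. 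So a correct write-up must add an explicit assumption of the form $m/u \gtrsim 2(k-1)$ (or restate the bound as $m\bigl((1+\tfrac{m}{u})^{k-1}-1\bigr)$), and must strengthen the hash-family assumption beyond plain universality; with those amendments your argument is essentially complete.
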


Thus, as a heuristic, it is proposed to collect a $2 {(\frac{m}{u})}^{k-1}$ times larger sample for the hash-relaxed superset ($m$ and $u$ as defined in Lemma~\ref{lem:hash}) if that does not exceed the memory limit, because the join size is expected to be at least as large as the tables.  As the number $k$ counts the tables where $u$ is much smaller than $m$, it is dependent on the choice of $u$. Thus, different choices of $u$ can be tried out find the best choice within the formula before the sampling commences.
When the memory limit does not allow the needed sample size, or the join has many distinct values without joining keys, then the hashing algorithm can be run multiple times with different random seeds to collect the targeted amount of samples.

\newcommand{\countjoin}[2]{ \, {\overset{\pi_{#2}}{\bowtie}}_{#1} \, }

\newcommand{\PARENT}{\mathcal{P}}
\newcommand{\PARENTJ}{P}

\newcommand{\TABLE}{\mathcal{T}}
\newcommand{\TABLEJ}{T}

\newcommand{\TABLEBATCH}{T^{batch}}
\newcommand{\PARENTBATCH}{P^{batch}}

\newcommand{\CHILD}{\mathcal{C}}
\newcommand{\CHILDJ}{C}
\newcommand{\CIRCLED}[1]{{\large \textcircled{\small #1}}}

%

\section{Online Multinomial Sampler}
\label{sec:general}

This section solves how to draw a multinomial sample for an unknown population size where multinomial probabilities are only given indirectly through weights that are proportional to the probabilities. Each population item is only seen once (one-pass) and only memory proportional to the sample size is available, such that storing all population items it not feasible.
The online multinomial sampling problem (equivalent to the online weighted with-replacement problem) has been alluded to in the literature \cite{efraimidis2015weighted}, but appears to not have been explicitly solved. Note that unlike in \cite{jayaram2019weighted} the weights here are real-valued and unlike in \cite{startek2016asymptotically} only weights proportional to probabilities are available. The baseline technique from the literature would be to maintain an independent sampler for each element of the with-replacement sample, which does not scale with larger sample sizes. This section proposes an online multinomial sampler based on an adaption of an existing sampling technique \cite{DBLP:journals/ipl/EfraimidisS06}.

\begin{algorithm}
\DontPrintSemicolon
Initiate $S_1, S_2, \ldots, S_n$ as weighted reservoir sample  \label{alg:initbegin} \;

Initiate $W_\mathcal{P}$ as $0$ and multiset $R$ as $\{\}$ \label{alg:initend} \;
\ForEach{stream item $x$ with weight $w(x)$}{

    Consider $x$ for inclusion in $S$ and add $w(x)$ to $W_\mathcal{P}$ \;
} \label{alg:streamend}
Initiate $W_M$ as $0$, $j$ as $1$ and $\ell$ as $1$\;
\ForEach{$j \in \{1, \ldots, n\}$}{
    
    Draw $u \in [0,1]$ with $Pr(u) \propto 1$ \label{alg:coinflip} \;
    
    \If{ $u < W_M/W$ \label{alg:coinflipend}}{
        Set $M_j$ to a randomly drawn $M_i \in \{M_1, \ldots, M_{i-1} \}$ with $Pr(M_i) \propto w(M_i)$ \;
    } \Else {
        Add $w(S_\ell)$ to $W_M$, set $M_j$ to $S_\ell$ and set $\ell$ to $\ell+1$\;
    }
}\label{alg:end}
Return multinomial sample $M_1, \ldots, M_n$ \;
\caption{Online Multinomial Sampler \label{alg:onlinesampler}}

\end{algorithm}

The basic idea of the proposed approach is to draw an independent element with probabilities proportional to weights at each step. 
Observe that after the first item is drawn, the probability of drawing one of the previous items again is simply proportional to the weights of those previous items.  Thus, one needs to consider two cases. In one case the previously selected item is drawn again and in the other case a new random element is selected from the remaining population without previously selected items.  For the first case only need the previously selected items are needed and for the second case only items are needed from an equally small ordered sample that can be obtained using existing techniques \cite{DBLP:journals/ipl/EfraimidisS06} which will serve as a proxy for the population items.

Ordered sampling brings the population items into an order such that one can pick the first (or last) items as the sample. Using this idea, one can generate for each item with weight $w_i$ an independent exponential variate $v_i \sim Exp(w_i)$ and order them by these variates. It follows then from well-known properties of exponential variates that $Pr(v_i = \min( \{v_1, v_2, \ldots, v_N\}) = \frac{w_i}{w_1+w_2+\ldots+w_N}$. Thus, the minimum is chosen with probability proportional to weights, the second-smallest item with probability proportional to weights if the minimum item did not exist, and so on.  Efraimidis \& Spirakis \cite{DBLP:journals/ipl/EfraimidisS06} (E\&S) analogously use $k_i = e^{-v_i} \sim U(0,1)^{1/w_i}$ as random ``keys'', which flips the order by applying the strictly decreasing function $e^{-x}$, but otherwise yields the same properties and probabilities just for the maximum. E\&S then define the $n$ items with the largest $k_i$ as a ``weighted sample'' of size $n$, although only the largest-key item is chosen proportional to weights in the final sample \cite{tille2019general}.  Thus, E\&S do not use the weights to decide probabilities of the overall sampling process, but only in relation to rounds of without-replacement draws in the urn model. 
Meanwhile, here the $n$ largest-key items are only used as an intermediary for the population. Weighted with-replacement sampling has a very simple interpretation, i.e., each item is independently selected into the sample with probability proportional to its weight.

Now, the proposed approach can be described in Algorithm~\ref{alg:onlinesampler}. The first step of the proposed approach is to find the total weight $W_\mathcal{P}$ of all population items $\mathcal{P}$ and the $n$ largest-key items $S_1, S_2, \ldots, S_n$ from $\mathcal{P}$ still sorted by the keys, i.e., $k_1 \ge k_2 \ge \ldots \ge k_n$ (Lines~\ref{alg:initbegin}-\ref{alg:streamend}).  From that, it follows that all other population items $\mathcal{P} /\{S_1, S_2, \ldots, S_n\}$ have a smaller key than $k_n$ and will not make it into the multinomial sample. To find the largest-key items one can employ the weighted reservoir sampler by E\&S \cite{DBLP:journals/ipl/EfraimidisS06} with the exponential jump algorithm. In the next steps of the proposed approach (Lines~\ref{alg:coinflip}-\ref{alg:end}), one draws in each step an independent element $M_j$ with $j \ge 1$ and maintains at each moment the total weight $W_M$ of all distinct elements $\{M_1, M_2, \ldots, M_{j-1}\}$ that were previously selected. Note that for $M_j = M_1$ the ``abuse of notation'' $\{M_1, M_2, \ldots, M_{j-1}\}$ is used to denote an empty set and for $M_j = M_2$ it is equal to $\{M_1\}$. With probabilities proportional to weights, $M_j$ is either in the set of previously chosen item or a previously unchosen item:
$$
M_j \in \begin{cases} \{M_1, M_2,\ldots,M_{j-1}\} &  \text{with probability } W_M/W_\mathcal{P} \\ \mathcal{P} / \{M_1, M_2,\ldots,M_{j-1}\} & \text{otherwise} \end{cases}
$$

A (biased) coin flip determines according to these probabilities from which set one random element is selected with probabilities according to weights. Observe that it is exploited here that $\{S_1\}$ is a one-item weighted sample from $\mathcal{P}$, $\{S_2\}$ is a one-item weighted sample from $\mathcal{P} / \{S_1\}$,  $\{S_3\}$ is a one-item weighted sample from $\mathcal{P} / \{S_1, S_2\}$ and so on. All these one-item samples are also independent from each other, since they are selected based on independent keys. Thus, this algorithm offers a simple way to obtain a multinomial sample in a challenging setting of an unknown population where just one stream pass is allowed.

\begin{figure}\centering
\begin{minipage}[b]{.475\linewidth} \centering  \begin{tikzpicture}[scale=2.5]
\draw (0, 0.0) -- (-0.025, 0.0) node[anchor=east] {\tiny $0.0$};
\draw (0, 0.2) -- (-0.025, 0.2) node[anchor=east] {\tiny$0.2$};
\draw (0, 0.4) -- (-0.025, 0.4) node[anchor=east] {\tiny$0.4$};
\draw (0, 0.6) -- (-0.025, 0.6) node[anchor=east] {\tiny$0.6$};
\draw (0, 0.8) -- (-0.025, 0.8) node[anchor=east] {\tiny$0.8$};
\draw (0, 1.0) -- (-0.025, 1.0) node[anchor=east] {\tiny$1.0$};
\draw (-0.1, 1.1) node[anchor=west] { $Pr(X \le a)$};
\newcommand{\xxa}{0.0}
\newcommand{\xxb}{0.1}
\newcommand{\xxc}{0.2}
\newcommand{\xxd}{0.3}
\newcommand{\xxe}{0.4}
\newcommand{\xxf}{0.5}
\newcommand{\xxg}{0.6}
\newcommand{\xxh}{0.7}
\newcommand{\xxi}{0.8}
\newcommand{\xxj}{0.9}
\newcommand{\xxl}{1.0}
\newcommand{\yya}{0.05}
\newcommand{\yyb}{0.2}
\newcommand{\yyc}{0.4}
\newcommand{\yyd}{0.45}
\newcommand{\yye}{0.475}
\newcommand{\yyf}{0.5}
\newcommand{\yyg}{0.6}
\newcommand{\yyh}{0.8}
\newcommand{\yyi}{0.95}
\newcommand{\yyj}{0.978}
\newcommand{\yyl}{1.0}
\newcommand{\xxs}{1.0}
\newcommand{\xxt}{1.2}
\draw[draw=red, ultra thin] (\xxa,\yya) -- (\xxb, \yyb) --  (\xxc, \yyc) -- (\xxd, \yyd)--  (\xxe, \yye) --  (\xxf, \yyf) -- (\xxg, \yyg) --  (\xxh, \yyh) --  (\xxi, \yyi) --  (\xxj, \yyj)  -- (\xxl, \yyl) -- (1.0,1.0) -- (1.0, 0.0) -- (0.0, 0.0) -- cycle;
\draw[fill=black!10] (\xxa,\yya) -- (\xxb, \yya) -- (\xxb, \yyb) --  (\xxc, \yyb) -- (\xxc, \yyc) -- (\xxd, \yyc)  -- (\xxd, \yyd)-- (\xxe, \yyd) -- (\xxe, \yye) -- (\xxf, \yye) -- (\xxf, \yyf) -- (\xxg, \yyf) -- (\xxg, \yyg) -- (\xxh, \yyg) -- (\xxh, \yyh) -- (\xxi, \yyh) -- (\xxi, \yyi) -- (\xxj, \yyi) -- (\xxj, \yyj)  -- (\xxl, \yyj) -- (\xxl, \yyl) -- (1.0,1.0) -- (1.0, 0.0) -- (0.0, 0.0) -- cycle;
\draw (1, 0) node[anchor=west] { $a \in \mathbb{N}$};
\draw[thick] (0,0) rectangle (1,1);
\end{tikzpicture} \subcaption{Discrete Distribution \label{fig:convdisc}} \end{minipage}
\begin{minipage}[b]{.475\linewidth} \centering \begin{tikzpicture}[scale=2.5]
\draw (0, 0.0) -- (-0.025, 0.0) node[anchor=east] {\tiny $0.0$};
\draw (0, 0.2) -- (-0.025, 0.2) node[anchor=east] {\tiny$0.2$};
\draw (0, 0.4) -- (-0.025, 0.4) node[anchor=east] {\tiny$0.4$};
\draw (0, 0.6) -- (-0.025, 0.6) node[anchor=east] {\tiny$0.6$};
\draw (0, 0.8) -- (-0.025, 0.8) node[anchor=east] {\tiny$0.8$};
\draw (0, 1.0) -- (-0.025, 1.0) node[anchor=east] {\tiny$1.0$};
\draw (-0.1, 1.1) node[anchor=west] { $Pr(X \le x)$};
\newcommand{\xxa}{0.0}
\newcommand{\xxb}{0.1}
\newcommand{\xxc}{0.2}
\newcommand{\xxd}{0.3}
\newcommand{\xxe}{0.4}
\newcommand{\xxf}{0.5}
\newcommand{\xxg}{0.6}
\newcommand{\xxh}{0.7}
\newcommand{\xxi}{0.8}
\newcommand{\xxj}{0.9}
\newcommand{\xxl}{1.0}
\newcommand{\yya}{0.05}
\newcommand{\yyb}{0.2}
\newcommand{\yyc}{0.4}
\newcommand{\yyd}{0.45}
\newcommand{\yye}{0.475}
\newcommand{\yyf}{0.5}
\newcommand{\yyg}{0.6}
\newcommand{\yyh}{0.8}
\newcommand{\yyi}{0.95}
\newcommand{\yyj}{0.978}
\newcommand{\yyl}{1.0}
\newcommand{\xxs}{1.0}
\newcommand{\xxt}{1.2}
\draw[fill=black!10] (\xxa,\yya) -- (\xxb, \yyb) --  (\xxc, \yyc) -- (\xxd, \yyd)--  (\xxe, \yye) --  (\xxf, \yyf) -- (\xxg, \yyg) --  (\xxh, \yyh) --  (\xxi, \yyi) --  (\xxj, \yyj)  -- (\xxl, \yyl) -- (1.0,1.0) -- (1.0, 0.0) -- (0.0, 0.0) -- cycle;
\draw[draw=red, ultra thin] (\xxa,\yya) -- (\xxb, \yya) -- (\xxb, \yyb) --  (\xxc, \yyb) -- (\xxc, \yyc) -- (\xxd, \yyc)  -- (\xxd, \yyd)-- (\xxe, \yyd) -- (\xxe, \yye) -- (\xxf, \yye) -- (\xxf, \yyf) -- (\xxg, \yyf) -- (\xxg, \yyg) -- (\xxh, \yyg) -- (\xxh, \yyh) -- (\xxi, \yyh) -- (\xxi, \yyi) -- (\xxj, \yyi) -- (\xxj, \yyj)  -- (\xxl, \yyj) -- (\xxl, \yyl) -- (1.0,1.0) -- (1.0, 0.0) -- (0.0, 0.0) -- cycle;
\draw (1, 0) node[anchor=west] { $x \in \mathbb{R}_+$};
\draw[thick] (0,0) rectangle (1,1);
\end{tikzpicture} \subcaption{Continuous Distribution \label{fig:conversioncont}} \end{minipage}
\caption{Converting discrete distribution into continuous one by adding uniform variates (cf. Lemma~\ref{lem:conv}) to apply continuous goodness-of-fit tests such as the K-S Test \label{fig:conv} } 
\end{figure}
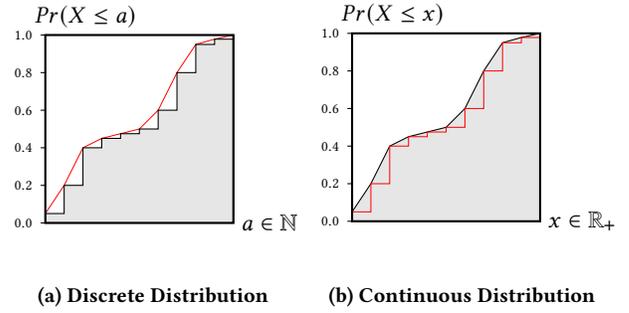

\section{ Goodness-of-Fit Tests for Multinomial Distributions}
\label{sec:gof}

In this section it is shown how to apply Kolmogorov-Smirnov (KS) testing when the reference distribution is discrete, as an alternative to goodness-of-fit tests for multinomial distributions such as the likelhood ratio test\cite{hoeffding1965asymptotically, cressie1984multinomial}.
Multinomial tests ignore the ordering unlike KS-tests that were originally designed for continuous distributions. There exist discrete variants of KS-tests \cite{pettitt1977kolmogorov,arnold2011nonparametric}, but they lose a lot of desirable properties of the continuous one such as distribution-free statistics, which is why it is here instead proposed to use the (conventional) continuous KS-test after turning the discrete distribution into a continuous one in a straightforward way:

\begin{lemma}[continuous conversion]\label{lem:conv}
Let $[x_1, x_2, \ldots, x_N]$ be a multinomially distributed frequency vector with event probabilities $p_1, \ldots, p_N$. Let $\mathcal{X}$ be a set obtained by drawing for the $i$-th event $x_i$ independent uniform variates between $(i-1)$ and $i$. Then $\mathcal{X}$'s distribution is continuous with a piecewise-linear cumulative distribution function (CDF), i.e., $F(x) = p_{\lceil x \rceil} (x-\lceil x \rceil)+\sum_{i = 1}^{\lceil x \rceil} p_i$.  
\end{lemma}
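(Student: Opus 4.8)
The plan is to reinterpret the multinomial frequency vector as the counts of $n$ independent categorical trials, attach an independent uniform perturbation to each trial, and then read off the common marginal CDF of the resulting points.

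First I would recall the standard fact that a multinomially distributed frequency vector $[x_1,\ldots,x_N]$ with $n=\sum_i x_i$ trials and event probabilities $p_1,\ldots,p_N$ has the same distribution as the vector of counts $x_i=|\{t: Y_t=i\}|$ obtained from i.i.d.\ categorical draws $Y_1,\ldots,Y_n$ with $\Pr(Y_t=i)=p_i$. The construction in the lemma then amounts to drawing i.i.d.\ $U_1,\ldots,U_n\sim\mathrm{Uniform}(0,1)$ independent of the $Y_t$ and setting $X_t:=Y_t-1+U_t$: conditioned on the counts, the $x_i$ points that land in the interval $[i-1,i]$ are precisely $x_i$ i.i.d.\ uniforms on that interval, which matches the description of $\mathcal{X}$. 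Since the $U_t$ are almost surely distinct, $\mathcal{X}$ is genuinely a set of $n$ points, and because a KS test only uses the empirical CDF, it suffices to study the unordered collection $\{X_1,\ldots,X_n\}$; the $X_t$ are i.i.d.\ since each is a deterministic function of the independent pair $(Y_t,U_t)$.

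Next I would compute the common marginal $F(x)=\Pr(X_t\le x)$ by conditioning on $Y_t$. Writing $k=\lceil x\rceil$, each event $Y_t=j$ with $j<k$ contributes $p_j$ in full (then $X_t\le j\le k-1<x$), the event $Y_t=k$ contributes $p_k\,\Pr(U_t\le x-(k-1))=p_k\,(x-(k-1))$ because $U_t$ is uniform on $[0,1]$ and $x-(k-1)\in(0,1]$, and events $Y_t=j$ with $j>k$ contribute $0$. Summing gives $F(x)=\sum_{j=1}^{k-1}p_j+p_k\,(x-(k-1))$, which is exactly $p_{\lceil x\rceil}(x-\lceil x\rceil)+\sum_{i=1}^{\lceil x\rceil}p_i$ after the rearrangement $p_k(x-k)+\sum_{i=1}^{k}p_i=p_k(x-(k-1))+\sum_{i=1}^{k-1}p_i$.

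Finally I would check that this $F$ is indeed continuous: it is affine with nonnegative slope $p_k$ on each $[k-1,k]$, the one-sided limits agree at every integer $k$ since the piece on $[k-1,k]$ ends at $\sum_{j\le k}p_j$ while the piece on $[k,k+1]$ starts from the same value, and $F(0)=0$, $F(N)=\sum_j p_j=1$. Hence $\mathcal{X}$ is an i.i.d.\ sample from a continuous distribution (in fact absolutely continuous, with density $p_{\lceil x\rceil}$ on $(k-1,k)$), which is what justifies applying the classical continuous KS test. I expect no deep obstacle here; the only part needing care is the bookkeeping around the ceiling function at the interval endpoints and the clean passage between the ``frequency vector'' and ``i.i.d.\ sequence'' viewpoints, after which everything is a direct computation.
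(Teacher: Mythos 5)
Your proof is correct and follows the same idea as the paper: smooth the discrete distribution by attaching independent uniform perturbations, so the stepwise CDF becomes piecewise-linear. The paper's own proof is only a two-sentence appeal to its figure, whereas you carry out the full reduction to i.i.d.\ pairs $(Y_t,U_t)$ and the explicit conditional computation of $F$, which is a welcome (and correct) elaboration rather than a different approach.
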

\begin{proof}
As shown in Figure~\ref{fig:conv}, the steps of the discrete CDF are smoothed by adding uniform variates, which have a continuous uniform distribution with a linear CDF. 
As a result one obtains a CDF that interpolates linearly between the steps without any jumps.
\end{proof}
The suggested continuous conversion describes how both samples and reference distributions have to be modified and is a clean way of applying continuous goodness-of-fit tests. While this comes with the disadvantage of slightly diluting the original distribution with additional randomness, this is needed to extend the distribution's finite support to an infinite one. Luckily, the impact of the added randomness quickly dissipates with larger discrete support size in this application.

\section{Related Work}
\label{sec:related}

All previous works on join sampling are proposed for simple random sampling (equi-weighted) or employ heuristics to sample from static joins \cite{huang13joins, DBLP:journals/pvldb/HadjieleftheriouYKS08, DBLP:conf/sigmod/KandulaSVOGCD16,DBLP:conf/sigmod/ParkMSW18, DBLP:conf/sigmod/HaasH99,DBLP:conf/cidr/LeisRGK017, DBLP:conf/sigmod/ChenY17, DBLP:conf/sigmod/0001WYZ16, DBLP:conf/eurosys/AgarwalMPMMS13} or joins of streams \cite{feraud2010sampling, DBLP:conf/ssdbm/Al-KatebLW07a, ren2007random,DBLP:series/ads/XieY07}. Samples collected using heuristics do not follow any well-defined distribution.
In what follows simple random sampling techniques are discussed (see a more detailled discussion in Section~\ref{sec:joinsamplingbackground}), which are a special case of weighted sampling with equal weights.

Foreign key joins that reunite tables from a normalised schema are equivalent to sampling from one table and then extending the sample using the other tables \cite{DBLP:conf/sigmod/AcharyaGPR99a,DBLP:conf/ssdbm/GemullaRL08}. As a result, it is sufficient to maintain one random sample, i.e., a join synopsis. 
In essence it is therefore more akin to conventional rather than join sampling. 
Interestingly, if weights or selections are introduced these approaches face new challenges that are not so easily addressed. 

In \cite{DBLP:conf/sigmod/Zhao0L20} a join sample is maintained for one particular $\theta$-join query over an evolving data stream (akin to monitoring queries), which means interactive user queries are not supported.

\REV{
Acyclic joins have a tree-like structure where each node corresponds to a tuple's (sub)join with the tables downstream, which allows decompositions akin to Yannakakis's work on acyclic conjunctive queries \cite{DBLP:conf/vldb/Yannakakis81}. This is useful when reasoning about join sampling algorithms.
Zhao \emph{et al.} \cite{DBLP:conf/sigmod/ZhaoC0HY18} were first to uncover the constraint that the extension of a uniformly sampled tuple \emph{must} be drawn with probability proportional to the size of the (sub)join of a tuple with the tables downstream, which generalises the insights from \cite{DBLP:conf/sigmod/ChaudhuriMN99} from binary joins to multi-way joins.
The key idea of the algorithm in \cite{DBLP:conf/sigmod/ZhaoC0HY18} is a tuple-oriented approach that first approximates this probability and then later rectifies it via rejection sampling, which is inspired by Olken's method \cite{olken1993random}. Rejections are avoided if approximations are exact. As the approach extends each tuple individually table-by-table, it mandates an index for efficient operation as explained by the authors of \cite{DBLP:conf/sigmod/ZhaoC0HY18} in the ``Remarks and Extensions'' section}\footnote{\REV{Quote from \cite{DBLP:conf/sigmod/ZhaoC0HY18}: "as with most random sampling techniques, our random sampling framework relies on efficient random access to the underlying data. Hence, we do require index on the join attribute when going from one relation to another. This means that our framework is mostly useful for an in-memory database where index structures are often built and maintained for join attributes."}}.
This was the first work on multi-way join sampling, and it reinvigorated interest in random sampling over joins after a long dormant period. The approach of \cite{DBLP:conf/sigmod/ZhaoC0HY18} to cyclic joins relies heavily on breaking the cycles near-optimally, but it is not clearly described how this optimisation problem can be solved efficiently. For sample validation, Zhao \emph{et al.} \cite{DBLP:conf/sigmod/ZhaoC0HY18} propose to utilise conventional KS-testing to validate samples over join rows. This is at odds with the statistics literature on (conventional) KS-testing, which requires a \emph{continuous} reference distribution\footnote{While KS-testing can be adopted to the discrete setting \cite{pettitt1977kolmogorov,arnold2011nonparametric}, the work \cite{DBLP:conf/sigmod/ZhaoC0HY18} reports distribution-free critical values which do not exist in the discrete case. The range of the reported $D$-statistics would also (almost certainly) not occur for the reported sample sizes, if the join result had sufficiently many distinct results to approximate a continuous distribution.}, and appears to confuse type-I and type-II errors. 
Hence, in Section~\ref{sec:gof}, a simpler and correct approach to goodness-of-fit testing is proposed. 
The experimental evaluation compares against sampling over a materialized join, and assumes that all data resides in memory, including pre-computed indices. The results in this work show that index-free solutions based on scans are not much slower, and are well-known to be more amenable to parallelisation.  Critically, none of discussed works support weighted random sampling over joins.

The major takeaway is that while existing solutions such as \cite{DBLP:conf/sigmod/ZhaoC0HY18,shanghooshabad2021pgmjoins} solve the problem for equal-probability sampling over inner joins, but do so at the cost of expensive indexing techniques, while in this work index-free methods are proposed that offer substantial memory savings that can also support selections, outer joins, semi/anti-joins and operate over streams, which is crucial for heterogeneous data sources.








\section{Experiments}
\label{sec:experiments}

This section compares the novel stream sampler (minimising scans) and economic sampler (minimising memory) to existing approaches.
Very large numbers that may be troublesome in practice are highlighted in red to draw the attention of the reader.

All experiments are performed on a single thread of a dedicated machine using Ubuntu 18.04.4 LTS, an Intel(R) Xeon(R) W-2145 CPU @ 3.70GHz with 16 cores and 512GB RAM. Memory measurements are taken using the unix primitive {\tt/usr/bin/time -v} that provides the ``maximum resident set size". All code has been written in C++11 by the same author and was compiled using GCC 7.5.0 with the compiler flags {\tt-O3} and {\tt-std=c++11}.

\subsection{Join Queries}

\begin{table}
\begin{tabular}{rrrrr}
\toprule
TPC-H        &  ${}_{SF=10}$  & ${}_{SF=100}$ & \multicolumn{2}{l}{Real-World Data Sets} \\ 
 \midrule
(W)Q3  &  $6.0 \cdot {10}^{7}$ & $6.0 \cdot {10}^{8}$ & DBLP & $4.5 \cdot {10}^{7}$\\ 
(W)QX &  $2.4 \cdot {10}^{\color{red}12}$ &  $2.4 \cdot {10}^{\color{red}13}$ & Twitter QT & $4.8 \cdot {10}^{\color{red}10}$\\ 
(W)QY  &  $5.3 \cdot {10}^{\color{red}11}$ &  $5.2 \cdot {10}^{\color{red}13}$ & Twitter QF & $2.7 \cdot {10}^{\color{red}21}$\\ 
\bottomrule
\end{tabular}
\caption{Join sizes of queries used in the experiments \label{tab:joinsizes}. Q3 and DBLP are foreign-key joins, QX and QF are acyclic many-to-many joins and QY and QT are cyclic joins.}
\end{table}

The experiments feature join queries over three datasets with join sizes shown in Table~\ref{tab:joinsizes}. The dataset used in the TPC-H benchmark, a social network of twitter users and a citation network using records from DBLP. Over the TPC-H benchmark the same queries are used as in \cite{DBLP:conf/sigmod/ZhaoC0HY18} (cf. Figure~\ref{fig:tpch}), but additionally define weights. The scale factor corresponds roughly to the size of the dataset in GBs. As a user-defined weighting function {\tt{o\textunderscore{}totalprice}} (1-{\tt{l\textunderscore{}discount}}) {\tt{l\textunderscore{}extendedprice}} is used. For QY, the values of both instances of {\tt{lineitem}} and {\tt{order}} are multiplied with each other. Weighted queries are referred to as WQ3 (foreign-key join), WQX (many-to-many join) and WQY (cyclic join). Over the (raw) twitter dataset \cite{DBLP:conf/www/KwakLPM10} the snowflake query QF and the triangle query QT is used as posed in \cite{DBLP:conf/sigmod/ZhaoC0HY18} (cf. Table~\ref{tab:expacyclic} and Table~\ref{tab:expcyclic}). For the DBLP data from Arnet Miner \cite{DBLP:conf/kdd/TangZYLZS08} a sample is collected over the citations
, weighted by the year in which the citing paper was published (cf. Figure~\ref{fig:expweights}).

\begin{figure}[t]
\centering
     \begin{subfigure}[b]{0.1\textwidth}
     \centering
         \begin{tikzpicture}[yscale=0.3,node distance=1.25cm]
\tikzstyle{relation} = [rectangle, draw=black]
\tikzstyle{attribute} = []
\draw node(l)[relation]{lineitem $l$};
\draw node(o)[relation, below of=l]{order $o$};
\draw node(c)[relation, below of=o]{customer $c$};
\path (c) -- (o) node[midway] (ck){\tt custkey};
\path (o) -- (l) node[midway] (ok){\tt orderkey};
\draw (c)--(ck);
\draw (o)--(ck);
\draw (o)--(ok);
\draw (l)--(ok);
\end{tikzpicture}
         \caption{Q3   \label{fig:q3}}
     \end{subfigure}
     \begin{subfigure}[b]{0.1\textwidth}
     \centering
         \begin{tikzpicture}[yscale=0.3,node distance=1.25cm]
\tikzstyle{relation} = [rectangle, draw=black]
\tikzstyle{attribute} = []

\draw node(n)[relation]{nation $n$};
\draw node(s)[relation, above of=n]{supplier $s$};
\draw node(c)[relation, above of=s]{customer $c$};
\draw node(o)[relation, above of=c]{order $o$};
\draw node(l)[relation, above of=o]{lineitem $l$};

\path (n) -- (s) node[midway] (nk1){\tt nationkey};
\path (s) -- (c) node[midway] (nk2){\tt nationkey};
\path (c) -- (o) node[midway] (ck){\tt custkey};
\path (o) -- (l) node[midway] (ok){\tt orderkey};

\draw (n)--(nk1);
\draw (s)--(nk1);
\draw (s)--(nk2);
\draw (c)--(nk2);
\draw (c)--(ck);
\draw (o)--(ck);
\draw (o)--(ok);
\draw (l)--(ok);
\end{tikzpicture}
         \caption{QX \label{fig:qx}}
     \end{subfigure}\begin{subfigure}[b]{0.2\textwidth}
         \centering
         \begin{tikzpicture}[yscale=0.3,node distance=1.25cm]
\tikzstyle{relation} = [rectangle, draw=black]
\tikzstyle{attribute} = []

\draw node(s)[relation]{supplier $s$} ;

\draw node(c1)[above of=s,relation,anchor=east,left=0.1cm]{customer $c_1$} ;
\draw node(c2)[above of=s,relation, anchor=west,right=0.1cm]{customer $c_2$} ;

\draw node(o1)[above of=c1, relation]{order $o_1$} ;
\draw node(o2)[above of=c2, relation]{order $o_2$} ;

\draw node(l1)[above of=o1, relation]{lineitem $l_1$} ;
\draw node(l2)[above of=o2, relation]{lineitem $l_2$} ;

\path (s) -- (c1) node[midway](nk1){};
\path (s) -- (c2) node[midway](nk2){};

\path (nk1) -- (nk2) node[midway](nk){\tt nationkey};

\draw (s) -- (nk);
\draw (nk) -- (c1);
\draw (nk) -- (c2);

\path (c1) -- (o1) node[midway](ck1){\tt custkey};
\draw (c1)--(ck1) -- (o1);
\path (c2) -- (o2) node[midway](ck2){\tt custkey};
\draw (c2)--(ck2) -- (o2);

\path (o1) -- (l1) node[midway](ok1){\tt orderkey};
\draw (o1)--(ok1) -- (l1);
\path (o2) -- (l2) node[midway](ok2){\tt orderkey};
\draw (o2)--(ok2) -- (l2);

\path (l1) -- (l2) node[midway,above=0.5cm](pk){\tt partkey};
\draw (l1)--(pk) -- (l2);

\end{tikzpicture}
        \caption{QY \label{fig:qy}}
         
     \end{subfigure}
\caption{Join queries over TPC-H data.  \label{fig:tpch} } 
\end{figure}
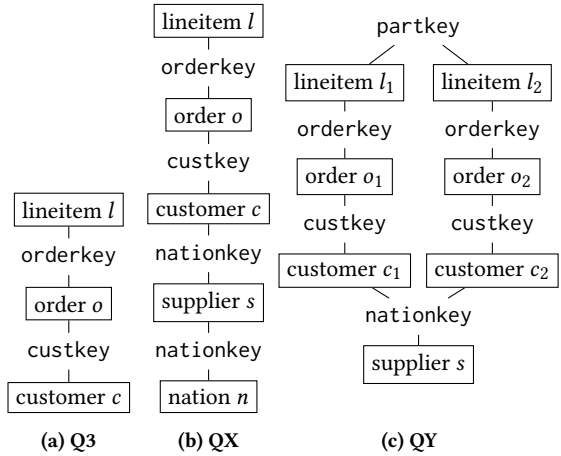

\subsection{Compared Approaches}

\begin{table}[]
    \centering
    \caption{Comparison of potential baselines}
    \begin{tabular}{llll}
\toprule
&  \multicolumn{3}{c}{\textbf{considered baselines}} \\
\midrule
&naive & \cite{DBLP:conf/sigmod/ZhaoC0HY18} code&index\\
\midrule
Q3$_{\ SF=100}$ memory&\cellcolor{yellow!20} 26.0 GB& \color{red} 142.0 GB&\color{red}  88.2 GB\\
Q3$_{\ SF=100}$ query-time&20.2 mins&\cellcolor{yellow!20}5.7 mins&7.1 mins\\
Q3$_{\ SF=100}$ adhoc-time&20.2 mins&64.2 mins&\cellcolor{yellow!20}18.2 mins\\
\midrule
QX$_{\ SF=100}$ memory& \color{red}  52.0 GB& \color{red} 169.7 GB&\color{red} 95.2 GB\\
QX$_{\ SF=100}$ query-time&\color{red} 1236.2 mins&\cellcolor{yellow!20}5.7 mins&7.9 mins\\
QX$_{\ SF=100}$ adhoc-time&\color{red} 1236.2 mins&91.5 mins&\cellcolor{yellow!20}20.3 mins\\
\midrule
QY$_{\ SF=10}$ memory&7.6 GB&\color{red} 36.1 GB&\color{red} 30.6 GB\\
QY$_{\ SF=10}$ query-time&23.4 mins&\cellcolor{yellow!20}5.0 mins&5.8 mins\\
QY$_{\ SF=10}$ adhoc-time&23.4 mins&14.4 mins&\cellcolor{yellow!20}9.8 mins\\
\midrule
QY$_{\ SF=100}$ memory& \color{red} 77.6 GB& \color{red} 384 GB &\color{red} 277.5 GB\\
QY$_{\ SF=100}$ query-time&\color{red} 895.1 mins& \cellcolor{yellow!20} 42 mins &52.9 mins\\
QY$_{\ SF=100}$ adhoc-time&\color{red} 895.1 mins& 116 mins &\cellcolor{yellow!20}102.7 mins\\
\bottomrule
\end{tabular}

    \label{tab:expbaselines}
\end{table}

\para{Naive.} 
This approach is a slightly more sophisticated version of first joining and then sampling. 
The approach joins together tables in a greedy fashion that reduces the sizes of the remaining tables until only one join column remains.
Then a merge-sort type of approach is used to retrieve the inverse of uniform variates. 
While this approach is not explicitly proposed in the literature, it seems fairly straightforward and is a clearly superior baseline to naive join-then-sample as it avoids materialising the full join result.

As shown in Table~\ref{tab:expbaselines} it is very slow and it is therefore not used as a general baseline.

\para{Index Baseline.} This approach attempts to implement \cite{DBLP:conf/sigmod/ZhaoC0HY18} with exact weights and generalises it to weighted sampling. 
As several details are not clear from the paper, many aspects are reverse engineered or simplified and generalised. 
The main goal of this implementation is to exploit indices in the same way as \cite{DBLP:conf/sigmod/ZhaoC0HY18}, but ideas from this work are used to fill in the gaps and streamline the approach. Note that while the authors of \cite{DBLP:conf/sigmod/ZhaoC0HY18} have made their code available, it is a bare bones research prototype that only supports integer-valued data, does not allow for non-join columns, is mostly optimised for speed without any way of outputting or validating samples. As Table~\ref{tab:expbaselines} shows, our code is comparably fast. Moreover, it can run on raw data, additionally supports weighted sampling and can be validated. Hence, the index-based implementation is used as the baseline to represent the state-of-the-art technique~\cite{DBLP:conf/sigmod/ZhaoC0HY18}.

\para{Stream (Proposed).}
The stream-approach implements the proposed approach from Section~\ref{sec:main} and prioritises a stream-like access over the data and limited number of scans for acyclic join queries. To only take one pass over the largest (or user-chosen) table it uses the  online multinomial sampler from Section~\ref{sec:general}.

\para{Economic (Proposed).}
The economic sampler implements the ideas from Section~\ref{sec:economic} and prioritises low memory usage. For foreign-key joins it first generates a uniform sample and then uses rejection sampling. For general acyclic joins, it uses the hashed-join technique to deal with high-cardinality attributes. For cyclic joins it attempts to simplify the join graph by greedily joining tables that have a small join size.

\subsection{Goodness-of-fit testing}

Figure~\ref{fig:expks} demonstrates that the author's implementations produce samples that follow the multinomial distributions based on the user-defined weights. This experiment uses a weighted cyclic query to make the test challenging, and also features approximations based on first sampling the base tables.
The Kolmogorov-Smirnov test statistic is obtained for each returned sample, and shading is used to indicate the critical region.
The considered approaches all stay with $99\%$ probability below the shaded region. 
For constrant, ``approximate'' solutions which evaluate the query based on samples of the tables (i.e., the ``sample then join'' approach) forare shown as well. 
Even very large samples (as large as 50\% of the base tables)
step into the shaded region, which means they have a statistically significant deviation from the target distribution.

\begin{figure}[t]
    \centering
\begin{tikzpicture}\begin{loglogaxis}[height=7cm,width=8cm,grid=major,legend pos=south west,xlabel={sample size},xtick={1, 10, 100, 1000, 10000, 100000, 1000000},xticklabels={${10}^{0}$, ${10}^{1}$, ${10}^{2}$, ${10}^{3}$, ${10}^{4}$, ${10}^{5}$, ${10}^{6}$},ylabel={K-S-Test D-Statistic},ytick={1e-05, 0.0001, 0.001, 0.01, 0.1, 1},yticklabels={${10}^{-5}$, ${10}^{-4}$, ${10}^{-3}$, ${10}^{-2}$, ${10}^{-1}$, ${10}^{0}$},
xmin= 100, xmax= 1000000,ymin= 0.0001, ymax= 1,ymax=0.2,ymajorgrids=true,xmajorgrids=false]

\addplot[blue, mark=triangle*, every mark/.append style={solid, fill=blue!30},mark size=6pt] coordinates {
(100,0.0899279)
(1000,0.0163431)
(10000,0.0122011)
(100000,0.00155376)
(1000000,0.000612942)
};
\addlegendentry{index baseline}
\addplot[red,mark=square*, thick,every mark/.append style={solid, fill=red!30},mark size=5pt] coordinates {
(100,0.0731948)
(1000,0.0272541)
(10000,0.00626821)
(100000,0.00312304)
(1000000,0.00108899)
};
\addlegendentry{stream}
\addplot[red,mark=x, dashed, very thick,every mark/.append style={very thick,solid, fill=red!30},mark size=5pt] coordinates {
(100,0.0714585)
(1000,0.0344963)
(10000,0.0084093)
(100000,0.00208175)
(1000000,0.000650384)
};
\addlegendentry{economic}
\addplot[black!70, very thick, mark=*, every mark/.append style={solid, fill=black!50},mark size=4pt] coordinates {
(100,0.0579021)
(1000,0.0244967)
(10000,0.0222946)
(100000,0.0074377)
(1000000,0.00808028)
};
\addlegendentry{approx${_{10\%}}$ baseline}
\addplot[black!70, very thick, mark=diamond*, every mark/.append style={solid, fill=white},mark size=5pt] coordinates {
(100,0.0984024)
(1000,0.0219261)
(10000,0.0134485)
(100000,0.00466278)
(1000000,0.00231134)
};
\addlegendentry{approx${_{50\%}}$ baseline}
\path[name path=axis] (axis cs:1,1) -- (axis cs:1000000,1);
\addplot[name path=f,domain=1:1000000,draw=none]  {1.62762/sqrt(x)};
\addplot[fill=black!20] fill between[of=f and axis];
\addplot[pattern=horizontal lines, pattern color=black!40] fill between[of=f and axis];
\path[name path=ax] (axis cs:1,0.0001) -- (axis cs:1000000,0.00001);
\addplot[name path=g,domain=1:1000000,draw=none]  {1.62762/sqrt(x)};
\end{loglogaxis}
\end{tikzpicture}
    \caption{K-S goodness-of-fit-test (Section~\ref{sec:gof}) on cyclic WQY query $(SF=1)$ with over ${10}^{12}$ result rows.  } 
    \label{fig:expks}
\end{figure}
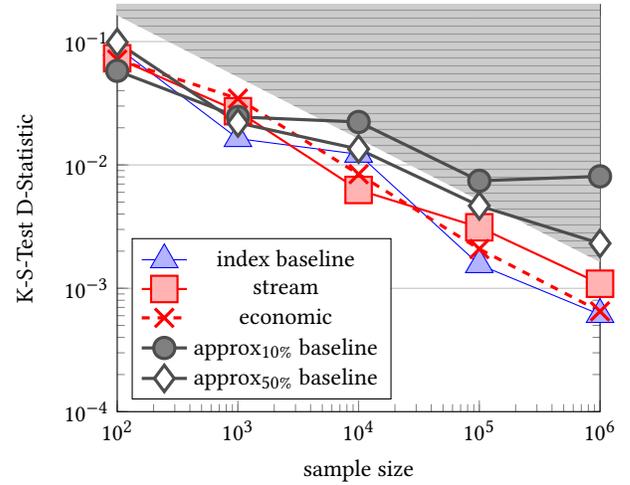

\subsection{Many-To-One Joins (Foreign-Key Joins)}

\begin{figure}[t]
    \centering
\begin{tikzpicture}\begin{loglogaxis}[height=5cm,width=8cm,grid=major,legend pos=north west,xlabel={weighting function (with year of publication $Y$)},xtick={1.0, 1.1, 1.2, 1.3, 1.4, 1.5, 1.6, 1.7, 1.8, 1.9, 2.0},xticklabels={{${1.0}^{-Y}$}, {${1.1}^{-Y}$}, {${1.2}^{-Y}$}, {${1.3}^{-Y}$}, {${1.4}^{-Y}$}, {${1.5}^{-Y}$}, {${1.6}^{-Y}$}, {${1.7}^{-Y}$}, {${1.8}^{-Y}$}, {${1.9}^{-Y}$}, {${2.0}^{-Y}$}},ylabel={runtime (mins)},ytick={0.000167, 0.0167, 1, 10, 60, 600, 6000},yticklabels={{10 ms}, {1 sec}, {1 min}, {10min}, {1h}, {10h}, {100h}},
xmin= 1.0, xmax= 1.4,ymin= 0.1, ymax= 600,]

\addplot[red,mark=x, dashed, very thick,every mark/.append style={very thick,solid, fill=red!30},mark size=5pt] coordinates {
(1.0,0.901103366667)
(1.1,4.29688941667)
(1.2,6.3871623)
(1.3,16.8509422833)
(1.4,105.7490756)
};
\addlegendentry{economic}
\addplot[red,mark=square*, thick,every mark/.append style={solid, fill=red!30},mark size=5pt] coordinates {
(1.0,0.925435583333)
(1.1,0.66489425)
(1.2,0.570342333333)
(1.3,0.512777216667)
(1.4,0.471129)
};
\addlegendentry{stream}
\addplot[blue, mark=triangle*, every mark/.append style={solid, fill=blue!30},mark size=6pt] coordinates {
(1.0,1.12577918333)
(1.1,0.704796183333)
(1.2,0.586131716667)
(1.3,0.52198095)
(1.4,0.495857833333)
};
\addlegendentry{index baseline}
\end{loglogaxis}
\end{tikzpicture}
    \caption{Drastic impact of weighting on runtimes of a million samples over foreign-key joins on the DBLP dataset. }
    \label{fig:expweights}
\end{figure}
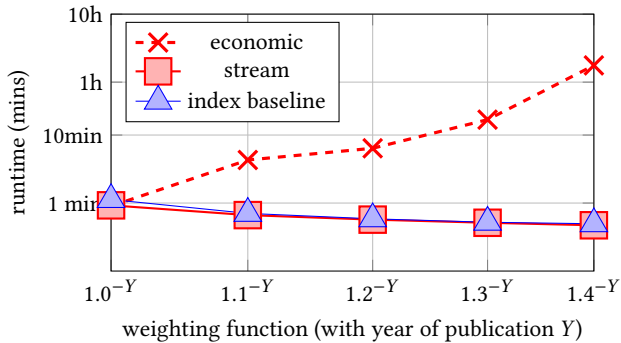

\begin{table}[b]
    \centering
    \caption{Runtime and memory comparison for a million samples over foreign-key joins}
\begin{tabular}{ll|ll}
\toprule
& \textbf{baseline}& \multicolumn{2}{c}{\textbf{proposed}} \\
\midrule
& index &stream&economic\\
\midrule
WQ3$_{\ SF=10}$ memory&\color{red} 10.6 GB&\cellcolor{yellow!20}1.6 GB&1.9 GB\\
WQ3$_{\ SF=10}$ query-time&\cellcolor{yellow!20}0.7 mins&1.3 mins&1.4 mins\\
WQ3$_{\ SF=10}$ adhoc-time&1.8 mins&\cellcolor{yellow!20}1.3 mins&1.4 mins\\
\midrule
WQ3$_{\ SF=100}$ memory&\color{red} 94.6 GB&9.8 GB&\cellcolor{yellow!20}2.6 GB\\
WQ3$_{\ SF=100}$ query-time&7.9 mins&12.3 mins&\cellcolor{yellow!20}7.5 mins\\
WQ3$_{\ SF=100}$ adhoc-time&20.5 mins&12.3 mins&\cellcolor{yellow!20}7.5 mins\\
\bottomrule
\end{tabular}

    \label{tab:expfk}
\end{table}

For many-to-one joins, one can first sample from the ``many'' side of the relationships and will have exactly one extension on the other side. 
This makes the problem a lot easier to solve, but still poses some challenges in the weighted sampling case. 
To support weighted sampling, one can obtain an upper bound for the user-defined weights by computing the product of maximal base table weights and then accept uniformly sampled rows with probability equal to the ratio with the weight upper bound.
As can be seen in Table~\ref{tab:expfk}, this works well for linear weight functions. 
Here, the economic sampler uses less memory by assuming group weights being equal to one, which is then rectified through rejections. Anticipating some rejections, it collects a ten times larger sample.

However, for exponential weight functions as in Figure~\ref{fig:expweights} the approach becomes very slow as it has to repeatedly restart the process to collect more samples. In this example using the DBLP query, one sets the weights to be a varying exponential function based on the year of citations. Pre-computed indices do not seem beneficial in this case. It is therefore better to switch to the stream sampler if the rejection rate grows too large, as it does not appear to be affected by the weight distribution.

\subsection{Cyclic joins (For Sake of Completeness)}

While the main focus of this work are acyclic joins, results for cyclic joins are shown in in Table~\ref{tab:expcyclic}. Not shown in the table is QT due to reasons outlined in the following. For the cyclic QY query over the TPC-H dataset, the economic sampler can reduce the query to an acyclic join and thereby achieve some memory savings, but this is not possible for the triangle query QT over the Twitter dataset. 
For QT there exists an acyclic join that is not a lot larger than the cyclic join and the proposed samplers can efficiently find it. Still, the rejection rate is too large for anything but an index-based sampler to complete it in a competitive amount of time. The index-based sampler occupies for this task 454.8 GB of main memory and needs 341.7 mins from scratch and 277.5 mins for just the query-answering.

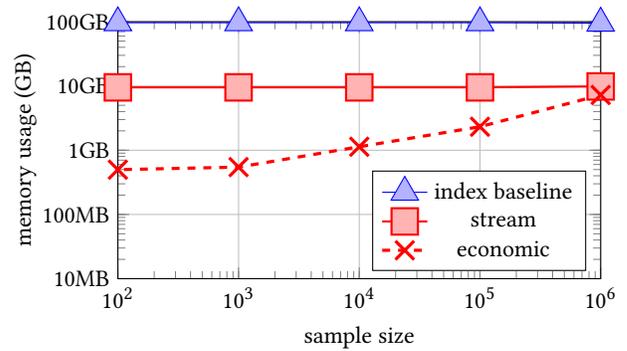
\begin{figure}[t]
    \centering
\begin{tikzpicture}\begin{loglogaxis}[height=5cm,width=8cm,grid=major,legend pos=south east,xlabel={sample size},xtick={1, 10, 100, 1000, 10000, 100000, 1000000},xticklabels={${10}^{0}$, ${10}^{1}$, ${10}^{2}$, ${10}^{3}$, ${10}^{4}$, ${10}^{5}$, ${10}^{6}$},ylabel={memory usage (GB)},ytick={0.001, 0.01, 0.1, 1, 10, 100, 1000, 10000, 100000},yticklabels={{1MB}, {10MB}, {100MB}, {1GB}, {10GB}, {100GB}, {1TB}, {10TB}, {100TB}},
xmin= 100, xmax= 1000000,ymin= 0.01, ymax= 100,]

\addplot[blue, mark=triangle*, every mark/.append style={solid, fill=blue!30},mark size=6pt] coordinates {
(1,96.5300827026)
(10,96.5300636292)
(100,96.5300140381)
(1000,96.5300979614)
(10000,96.5299720764)
(100000,96.5301055908)
(1000000,95.0286102295)
};
\addlegendentry{index baseline}
\addplot[red,mark=square*, thick,every mark/.append style={solid, fill=red!30},mark size=5pt] coordinates {
(1,9.51320648193)
(10,9.51330566406)
(100,9.51321029663)
(1000,9.51473617554)
(10000,9.5132522583)
(100000,9.51337814331)
(1000000,9.83720016479)
};
\addlegendentry{stream}
\addplot[red,mark=x, dashed, very thick,every mark/.append style={very thick,solid, fill=red!30},mark size=5pt] coordinates {
(10,0.498146057129)
(100,0.498016357422)
(1000,0.546043395996)
(10000,1.13215255737)
(100000,2.32070922852)
(1000000,7.24256896973)
};
\addlegendentry{economic}
\end{loglogaxis}
\end{tikzpicture}
    \caption{Memory for different sample sizes over QX (SF=100)}
    \label{fig:exphash}
\end{figure}

\begin{table}[]
    \centering
    \caption{Runtime and memory comparison for a million samples over cyclic (many-to-many) joins.}
\begin{tabular}{ll|ll}
\toprule
& \textbf{baseline}& \multicolumn{2}{c}{\textbf{proposed}} \\
\midrule
& index &stream&economic\\
\midrule
WQY$_{\ SF=10}$ memory&31.8 GB&28.5 GB&\cellcolor{yellow!20}11.7 GB\\
WQY$_{\ SF=10}$ query-time&\cellcolor{yellow!20}6.1 mins&18.9 mins&15.9 mins\\
WQY$_{\ SF=10}$ adhoc-time&\cellcolor{yellow!20}10.6 mins&18.9 mins&15.9 mins\\
\midrule
WQY$_{\ SF=100}$ memory& \color{red}290.1 GB&73.7 GB&\cellcolor{yellow!20}41.6 GB\\
WQY$_{\ SF=100}$ query-time&\cellcolor{yellow!20}55.9 mins&100.5 mins&68.4 mins\\
WQY$_{\ SF=100}$ adhoc-time&110.9 mins&100.5 mins&\cellcolor{yellow!20}68.4 mins\\
\bottomrule
\end{tabular}

    \label{tab:expcyclic}
\end{table}

\subsection{General (Many-To-Many) Acyclic Joins}

\begin{table}[t]
    \centering
    \caption{Runtime and memory comparison for a million samples over acyclic (many-to-many) joins.}
\begin{tabular}{ll|ll}
\toprule
& \textbf{baseline}& \multicolumn{2}{c}{\textbf{proposed}} \\
\midrule
& index &stream&economic\\
\midrule
WQX$_{\ SF=10}$ memory& \color{red} 10.7 GB&\cellcolor{yellow!20}1.6 GB&2.2 GB\\
WQX$_{\ SF=10}$ query-time&\cellcolor{yellow!20}0.7 mins&1.4 mins&1.8 mins\\
WQX$_{\ SF=10}$ adhoc-time&1.9 mins&\cellcolor{yellow!20}1.4 mins&1.8 mins\\
\midrule
WQX$_{\ SF=100}$ memory& \color{red} 95.0 GB&9.8 GB&\cellcolor{yellow!20}7.2 GB\\
WQX$_{\ SF=100}$ query-time&\cellcolor{yellow!20}7.9 mins&13.1 mins&14.0 mins\\
WQX$_{\ SF=100}$ adhoc-time&20.6 mins&\cellcolor{yellow!20}13.1 mins&14.0 mins\\
\midrule
QF memory& \color{red} 428.4 GB&91.5 GB&\cellcolor{yellow!20}16.2 GB\\
QF query-time&63.1 mins&\cellcolor{yellow!20}36.0 mins&47.0 mins\\
QF adhoc-time&126.5 mins&\cellcolor{yellow!20}36.0 mins&47.0 mins\\
\bottomrule
\end{tabular}

    \label{tab:expacyclic}
\end{table}

The query QX over the TPC-H dataset used in Figure~\ref{fig:exphash} does not have many distinct values, yet the economic sampler can vastly reduce the memory footprint if one only needs to collect a smaller sample, using the hashed-join technique.

The intuition is that for small samples one can choose to collect a larger set of tuples for the hashed-join technique, and then discard the false positives.

For the query QF over the Twitter dataset (cf. Table~\ref{tab:expacyclic}), one can also see vast memory savings for a million of sampled rows using the economic sampler method.
This is because it needs far less memory when the number of distinct join attribute values is large.

\section{Conclusion}

In this paper two new samplers for join queries were proposed that offer multiple novel features such as support for user-defined sampling weights and memory-efficient ad-hoc operation over stream-accessed tables, which is very useful as most data management systems offer some type of streaming interface to the data.
First, a stream-based sampler, which is the first join sampler that only needs stream access and can collect a join sample by passing once over one table and two times over the other tables. Second, an economical sampler, which uses various strategies to exploit foreign-key joins, simpler instances of cyclic joins and can handle high-cardinality join attributes through a novel hashed join technique that adheres to the targeted sampling probabilities without any approximations. The basic idea of replacing equi-joins with equi-hash joins to reduce the number of considered join attribute values is not present in the literature and offers here a crucial instrument to deal with common cases such as join attributes that are database keys. In the experiments it is shown that these new ideas translate to improved practical performance and show that indices are only useful cyclic joins  heuristics that need to generate a very large number of samples to find valid join rows. The open problems that remain are cyclic joins, non-factorisable weights with a very unequal distribution of weights, non-equi joins and other more complicated relational expressions.

\begin{acks}
\REV{This work is supported by European Research Council grant ERC-2014-CoG 647557 and The Alan Turing
Institute under the EPSRC grant EP/N510129/1.}
\end{acks}

\bibliographystyle{ACM-Reference-Format}
\bibliography{reference}

\end{document}